\def\doi{8 (3:05) 2012}
\begin{document}

\newtheorem{theorem}{Theorem}
\newtheorem{lemma}{Lemma}
\newtheorem{corollary}{Corollary}

\theoremstyle{definition}
\newtheorem{definition}{Definition}
\newtheorem{example}{Example}

\theoremstyle{remark}
\newtheorem{remark}{Remark}

\newcommand{\PS}[1]{\mathcal{P}(#1)}

\renewcommand{\mid}{\,|\,}
\newcommand{\set}[1]{\{#1\}}
\newcommand{\bset}[1]{\bigl\{#1\bigr\}}
\newcommand{\Bset}[1]{\Bigl\{#1\Bigr\}}
\newcommand{\fset}[1]{\left\{#1\right\}}
\newcommand{\df}{:=}
\renewcommand{\emptyset}{\varnothing}

\newcommand{\calf}[1]{\mathcal{#1}}
\newcommand{\cal}[1]{\mathcal{#1}}
\newcommand{\todo}[1]{\textcolor{red}{[\begin{small}\texttt{\textbf{TODO: #1}}\end{small}]}}

\newcommand{\HRule}{\rule{\linewidth}{0.5mm}}

\newcommand{\Prob}{\mathsf{Pr}}
\newcommand{\minweight}{\mathsf{Mwpm}}

\newcommand{\proves}{\vdash}
\newcommand{\witness}[2]{\mathit{Witness}_{#1}^{#2}}
\newcommand{\seq}[1]{\left\langle    #1 \right\rangle }

\newcommand{\It}[1]{\mathit{#1}}
\newcommand{\appref}[1]{Appendix~\ref{app:#1}}
\newcommand{\algref}[1]{Algorithm~\ref{alg:#1}}
\newcommand{\defref}[1]{Definition~\ref{def:#1}}
\newcommand{\theoref}[1]{Theorem~\ref{theo:#1}}
\newcommand{\propref}[1]{Proposition~\ref{prop:#1}}
\newcommand{\corref}[1]{Corollary~\ref{cor:#1}}
\newcommand{\lemref}[1]{Lemma~\ref{lem:#1}}
\newcommand{\exref}[1]{Example~\ref{ex:#1}}
\newcommand{\reref}[1]{Remark~\ref{re:#1}}
\newcommand{\eref}[1]{\eqref{eq:#1}}
\newcommand{\figref}[1]{Figure~\ref{fig:#1}}

\newcommand{\secref}[1]{Section~\ref{sec:#1}}
\newcommand{\eqnref}[1]{Eq.~(\ref{eq:#1})}

\newcommand{\WPHP}{\mathsf{sWPHP}}
\newcommand{\iWPHP}{\mathsf{iWPHP}}
\newcommand{\bPHP}{\mathsf{bPHP}}
\newcommand{\rWPHP}{\mathsf{rWPHP}}
\newcommand{\PHP}{\mathsf{PHP}}

\newcommand{\VPV}{\textit{VPV}}
\newcommand{\PV}{\textit{PV}}

\newcommand{\FP}{\mathsf{FP}}
\newcommand{\ZPP}{\mathsf{ZPP}}
\newcommand{\PP}{\mathisf{PP}}
\newcommand{\PH}{\mathsf{PH}}
\newcommand{\BPP}{\mathsf{BPP}}
\newcommand{\pBPP}{\mathsf{prBPP}}
\newcommand{\coNP}{\mathsf{coNP}}
\newcommand{\CC}{\mathsf{CC}}
\newcommand{\AC}{\mathsf{AC}}
\newcommand{\DET}{\mathsf{DET}}
\newcommand{\RDET}{\mathsf{RDET}}
\newcommand{\NC}{\mathsf{NC}}
\newcommand{\FNC}{\mathsf{FNC}}
\newcommand{\RNC}{\mathsf{RNC}}
\newcommand{\FRNC}{\mathsf{FRNC}}
\newcommand{\NP}{\mathsf{NP}}
\newcommand{\NL}{\mathsf{NL}}
\renewcommand{\L}{\mathsf{L}}
\newcommand{\RsL}{\mathsf{R}\#\mathsf{L}}
\newcommand{\TC}{\mathsf{TC}}
\newcommand{\SL}{\mathsf{SL}}
\newcommand{\APP}{\mathsf{APP}}
\newcommand{\PSPACE}{\mathsf{PSPACE}}
\newcommand{\IP}{\mathsf{IP}}
\newcommand{\numP}{\mathsf{\#P}}
\newcommand{\MA}{\mathsf{MA}}
\newcommand{\MAM}{\mathsf{MA}}
\newcommand{\AM}{\mathsf{AM}}
\newcommand{\pMA}{\mathsf{prMA}}
\newcommand{\pMAM}{\mathsf{prMAM}}
\newcommand{\pAM}{\mathsf{prAM}}
\newcommand{\pP}{\mathsf{prP}}
\newcommand{\BP}{\mathsf{BP}}
\newcommand{\ZPLP}{\mathsf{ZPLP}}
\newcommand{\ZPL}{\mathsf{ZPL}}
\newcommand{\RP}{\mathsf{RP}}
\newcommand{\RL}{\mathsf{RL}}
\newcommand{\coRP}{\mathsf{coRP}}
\newcommand{\coRL}{\mathsf{coRL}}
\newcommand{\BPL}{\mathsf{BPL}}
\newcommand{\coRNC}{\mathsf{coRNC}}
\newcommand{\RAC}{\mathsf{RAC}}
\renewcommand{\P}{\mathsf{P}}
\newcommand{\Ppoly}{\mathsf{P}/\mathsf{poly}}

\newcommand{\VCC}{\mathit{VCC}}
\newcommand{\VAC}{\mathit{VAC}}
\newcommand{\VNC}{\mathit{VNC}}
\newcommand{\VNL}{\textit{VNL}}
\newcommand{\VL}{\mathit{VL}}
\newcommand{\VTC}{\mathit{VTC}}
\newcommand{\VSL}{\mathit{VSL}}
\newcommand{\VsL}{\mathit{V\#L}}
\newcommand{\LFP}{\calf{L}_{\FP}}
\newcommand{\VC}{\mathit{VC}}
\newcommand{\CH}{\mathit{CH}}

\newcommand{\Log}{\mathit{Log}}
\newcommand{\LLog}{\mathit{LogLog}}
\newcommand{\LMAX}{\mathit{LMAX}}
\newcommand{\LMIN}{\mathit{LMIN}}
\newcommand{\LIND}{\mathit{LIND}}
\newcommand{\PIND}{\mathit{PIND}}
\newcommand{\IND}{\mathit{IND}}

\newcommand{\MFRP}{\mathit{MFRP}}
\newcommand{\Det}{\mathsf{Det}}
\newcommand{\Tran}{\mathsf{Transpose}}
\newcommand{\row}{\mathsf{Row}}
\newcommand{\cost}{\mathsf{cost}}
\newcommand{\poly}{\mathsf{poly}}
\newcommand{\mysmash}{\,\#\,}
\newcommand{\onto}{\twoheadrightarrow}
\newcommand{\into}{\hookrightarrow}
\newcommand{\Nz}{\mathsf{numz}}

\newcommand{\bigp}[1]{\left(\begin{array}{ll}#1\end{array}\right)}
\newenvironment{myclaim}[1]
{
	\begin{flushleft}
	\textbf{Claim {#1}:}
}
{
	\end{flushleft}
}

\newenvironment{myclaimb}
{
	\begin{flushleft}
	\textbf{Claim:}
}
{
	\end{flushleft}
}


\numberwithin{equation}{section}
\title{Formalizing Randomized Matching Algorithms}

\author[D.~T.~M.~L\^e]{Dai Tri Man L\^e}	
\address{Department of Computer Science, University of Toronto}	
\email{\{ledt, sacook\}@cs.toronto.edu} 

\author[S.A. Cook]{Stephen A. Cook}

\begin{abstract}
Using Je\v{r}\'abek's framework for probabilistic reasoning,  we formalize the correctness of 
two fundamental $\RNC^{2}$ algorithms for bipartite perfect  matching within the theory $\VPV$ for polytime  reasoning. The first  algorithm is for testing if a bipartite graph has a perfect matching, and is based on the Schwartz-Zippel Lemma  for polynomial identity testing applied to the
Edmonds polynomial of the graph. The second algorithm, due to 
Mulmuley, Vazirani and Vazirani,  is for finding a perfect matching, where the key ingredient of this 
algorithm is  the Isolating Lemma. 

\end{abstract}

\subjclass{F.2.2, F.4.1}
\keywords{bounded arithmetic, computational complexity, weak pigeonhole principle, probabilistic reasoning, randomized algorithms, polynomial identity testing}

\maketitle


\section{Introduction}
There is a substantial literature on theories such as $\PV$, $S^1_2$,
$\VPV$, $V^1$
which capture polynomial
time reasoning \cite{Coo75,Bus86,Kra95,CN10}.  These theories prove
the existence of polynomial time functions, and in many cases they
can prove properties of these functions that assert the correctness
of the algorithms computing these functions.  But in general these
theories cannot prove the existence of probabilistic polynomial
time relations such at those in $\ZPP,\RP, \BPP$ because defining the
relevant probabilities
involves defining cardinalities of exponentially large sets.
Of course stronger theories, those which can define $\numP$ or
$\PSPACE$ functions can treat these probabilities, but such theories
are too powerful to capture the spirit of feasible reasoning.

Note that we cannot hope to find theories that exactly
capture probabilistic complexity
classes such as $\ZPP,\RP,\BPP$ because these are `semantic' classes
which we suppose are not recursively enumerable (cf. \cite{Tha02}).  Nevertheless
there has been significant progress toward developing tools in
weak theories that might be used to describe some of the algorithms
in these classes.

Paris, Wilkie and Woods \cite{PWW88} and Pudl\'ak \cite{Pud91} observed
that we can simulate \emph{approximate counting} in bounded arithmetic
by applying variants of the weak pigeonhole principle.  It seems
unlikely that any of these variants can be proven in the theories for
polynomial time, but they can be proven in Buss's theory $S_ {2}$
for the polynomial hierarchy.
The first connection between the weak pigeonhole principle and randomized 
algorithms was noticed  by Wilkie  (cf. \cite{Kra95}), who showed that
randomized polytime functions witness $\Sigma_{1}^{b}$-consequences of
$S^{1}_{2}+\WPHP(\PV)$ (i.e., $\Sigma_{1}^ {B}$-consequences of
$V^{1}+\WPHP(\LFP)$ in our two-sorted framework), where  $\WPHP(\PV)$ 
denotes the surjective weak pigeonhole principle for all $\PV$
functions (i.e. polytime functions).

Building on these early results, Je\v{r}\'abek \cite{Jer04} showed that we can ``compare''  the sizes 
of two bounded $\Ppoly$ definable sets within $\VPV$  by constructing a surjective 
mapping from  one set to  another. 
Using this method, Je\v{r}\'abek developed tools for describing
algorithms in  $\ZPP$ and $\RP$. He also showed in   \cite{Jer04,Jer05} that the theory  
$\VPV+\WPHP(\LFP)$ is powerful enough  to formalize proofs  of very sophisticated 
derandomization  results, e.g. the Nisan-Wigderson theorem \cite{NW94} and the 
Impagliazzo-Wigderson theorem \cite{IW97}. (Note that Je\v{r}\'abek actually used the single-sorted theory 
$\PV_{1}+\WPHP(\PV)$, but  these two theories are isomorphic.)

In  \cite{Jer07},  Je\v{r}\'abek developed 
an even more systematic approach by showing that for any bounded  $\Ppoly$ definable set, there 
exists a suitable pair of surjective ``counting functions'' which can approximate 
the cardinality of the set up to  a polynomially small error.  From this and other results he argued
convincingly that $\VPV+\WPHP(\LFP)$ is the ``right'' theory for
reasoning about probabilistic polynomial time algorithms.
However so far no one has used his framework for feasible reasoning about
specific interesting randomized algorithms in classes such as $\RP$ and $\RNC^{2}$.

In the present paper we analyze (in $\VPV$) two such algorithms using
Je\v{r}\'abek's framework.  The first one is the $\RNC^{2}$
algorithm for determining whether a bipartite graph has a perfect
matching, based on the Schwartz-Zippel Lemma \cite{Sch80,Zip89} for polynomial
identity testing applied to the Edmonds polynomial \cite{Edm67}
associated with the graph.  The second algorithm, due to Mulmuley,
Vazirani and Vazirani \cite{MVV87}, is in the function class associated
with $\RNC^2$, and uses the Isolating Lemma to find such a perfect
matching when it exists.  Proving correctness of these algorithms
involves proving that the probability of error is bounded above
by $1/2$.  We formulate this assertion in a way suggested by
Je\v{r}\'abek's framework (see Definition \ref{def:prob}).
This involves defining polynomial time functions from
$\set{0,1}^n$ onto $\set{0,1}\times \Phi(n)$, where $\Phi(n)$ is the set of
random bit strings of length $n$ which cause an error in the
computation.  We then show that $\VPV$ proves that the function
is a surjection.

Our proofs are carried out in the theory $\VPV$ for polynomial
time reasoning, without the surjective weak   pigeonhole principle $\WPHP(\LFP)$. 
Je\v{r}\'abek used the $\WPHP(\LFP)$ principle to prove theorems justifying
the above definition of error probability, but we do not need
it  to apply his definition.

Many proofs concerning determinants are based on  the \emph{Lagrange expansion}  
(also known as the Leibniz formula)
\[\Det(A) = \sum_{\sigma\in S_{n}} \mathsf{sgn}(\sigma) \prod_{i=1}^{n} A(i,\sigma(i))\]
where the sum is over exponentially many terms.
Since our proofs in $\VPV$ can only use polynomial time concepts, we cannot
formalize such proofs, and must use other techniques.   In the same vein, the
standard proof of the Schwartz-Zippel Lemma assumes that a multivariate polynomial
given by an arithmetic circuit can be expanded to a sum of monomials.  But this
sum in general has exponentially many terms so again we cannot directly
formalize this proof in $\VPV$.

\section{Preliminaries}

\subsection{Basic bounded arithmetic} \label{s:basicBA}
The theory $\VPV$ for polynomial time reasoning used here
is a two-sorted theory described by Cook and Nguyen \cite{CN10}.
The two-sorted language has variables $x,y,z,\ldots$ ranging over 
$\mathbb{N}$ and variables $X,Y,Z,\ldots$ ranging over finite subsets
of $\mathbb{N}$, interpreted 
as bit strings. Two sorted vocabulary $\mathcal{L}_{A}^{2}$ includes the usual symbols $0,1,+,
\cdot,=,\le$ for arithmetic over $\mathbb{N}$, the length function $|X|$ for strings, the set 
membership relation $\in$, and string equality $=_{2}$ (subscript 2 is usually omitted). 
We will use the notation $X(t)$ for $t\in X$, and think of $X(t)$ as the $t^{\text{th}}$ bit in the string $X$.

The number terms in the base language $\mathcal{L}_{A}^{2}$  are built
from the constants $0,1$, variables $x,y,z,
\ldots$ and length terms $|X|$ using $+$ and $\cdot$. The only string terms are string variables, but 
when we extend $\mathcal{L}_{A}^{2}$ by adding string-valued functions, other string terms will be 
built as usual. The atomic formulas are $t=u$, $X=Y$, $t\le u$, $t\in X$ for any number terms $u,t$ 
and string variables $X,Y$. Formulas are built from atomic formulas using $\wedge,\vee,\neg$ and 
$\exists x$, $\exists X$, $\forall x$, $\forall X$. Bounded number quantifiers are defined as usual, 
and bounded string quantifier $\exists X\le t, \varphi$ stands for $\exists X (|X|\le t\, \wedge \varphi)$ 
and $\forall X\le t, \varphi$ stands for $\forall X (|X|\le t \rightarrow \varphi)$, where $X$ does not 
appear in term $t$.

$\Sigma_{0}^{B}$ is the class of all $\mathcal{L}_{A}^{2}$-formulas with  no string 
quantifiers and only bounded number quantifiers.  $\Sigma_{1}^{B}$-formulas are those of the form 
$\exists \vec{X}< \vec{t}\,\varphi$, where $\varphi\in \Sigma_{0}^{B}$ and the prefix of the bounded 
quantifiers might be empty. These classes are extended to $\Sigma_{i}^{B}$ (and $\Pi_{i}^{B}$)
for all $i\ge 0$, in the usual way.

Two-sorted complexity classes contain relations $R(\vec{x},\vec{X})$, where $\vec{x}$ are number 
arguments and $\vec{X}$ are string arguments. In defining complexity classes using machines or 
circuits, the number arguments are represented in unary notation and the string arguments are 
represented in binary. The string arguments are the main inputs, and the number arguments are 
auxiliary inputs that can be used to index the bits of strings.

In the two sorted setting, we can define $\AC^{0}$ to be the class of relations $R(\vec{x},\vec{X})$ 
such that some alternating Turing machine accepts $R$ in time $\calf{O}(\log n)$ with a constant 
number of alternations, where $n$ is the sum of all the numbers in $\vec{x}$ and 
the total length of all the string arguments in $\vec{X}$. 
Then from the descriptive complexity characterization of $\AC^{0}$, it can 
be shown that a relation $R(\vec{x},\vec{X})$ is in $\AC^{0}$ iff it is represented by some $\Sigma_
{0}^{B}$-formula $\varphi(\vec{x},\vec{X})$.

Given a class of relations $C$, we associate a class $\mathit{FC}$ of string-valued functions $F
(\vec{x},\vec{X})$ and number functions $f(\vec{x},\vec{X})$ with $C$ as follows. We require 
that these functions to be $p$-bounded, i.e., the length of the outputs of $F$  and $f$ is bounded by 
a polynomial in $x$ and $|X|$. Then we define $\mathit{FC}$ to consist of all $p$-bounded number 
functions whose graphs are in $C$ and all $p$-bounded string functions whose bit graphs are in 
$C$.

We write $\Sigma_{i}^{B}(\mathcal{L})$ to denote the class of  $\Sigma_{i}^{B}$-formulas which 
may have function and predicate symbols from $\mathcal{L}\cup \mathcal{L}_{A}^{2}$. A string 
function is $\Sigma_{0}^{B}(\mathcal{L})$-definable  if it is $p$-bounded and its bit graph is 
represented by a $\Sigma_{0}^{B}(\mathcal{L})$-formula. Similarly, a number function is $\Sigma_
{0}^{B}$-definable from $\mathcal{L}$ if it is $p$-bounded and its graph is represented by a $
\Sigma_{0}^{B}(\mathcal{L})$-formula.

The theory $V^{0}$ for $\AC^{0}$ is the basis to develop theories for small complexity classes 
within $\P$ in \cite{CN10}. The theory $V^{0}$ consists of the vocabulary $\mathcal{L}_{A}^{2}$ 
and axiomatized by the sets of \textit{2-BASIC} axioms as given in \figref{2basic}, which express basic properties of symbols 
in $\mathcal{L}_{A}^{2}$, together with the \emph{comprehension} axiom schema 
\begin{align*}
&\Sigma_{0}^{B}(\mathcal{L}_{A}^{2})\textit{-COMP: } &\exists X\le y\, \forall z<y \bigl(X(z) 
\leftrightarrow \varphi(z)\bigr),
\end{align*}
where $\varphi \in \Sigma_{0}^{B}(\mathcal{L}_{A}^{2})$ and $X$ does not occur free in $\varphi$.

\begin{figure}
\HRule
\begin{center}
\begin{minipage}{.45\textwidth}
\begin{align*}
&\textbf{B1. }x+1 \not= 0						\\
&\textbf{B2. }x+1=y+1 \rightarrow x= y			\\
&\textbf{B3. }x+0=x							\\
&\textbf{B4. }x+(y+1)=(x+y)+1					\\
&\textbf{B5. }x\cdot 0 =0						\\
&\textbf{B6. }x\cdot (y+1)  = (x\cdot y)+x			\\
&\textbf{B7. }(x\le y \wedge y\le x)\rightarrow x = y\\
\end{align*}
\end{minipage}
\begin{minipage}{.45\textwidth}
\begin{align*}
&\textbf{B8. }x\le x + y\\
&\textbf{B9. }0\le x\\
&\textbf{B10. }x\le y \vee y\le x\\
&\textbf{B11. }x\le y \leftrightarrow x < y +1\\
&\textbf{B12. }x\not= 0 \rightarrow \exists y\le x \left(y+1=x\right)\\
&\textbf{L1. }X(y)\rightarrow y<|X|				\\
&\textbf{L2. }y+1 = |X| \rightarrow X(y)\\
\end{align*}
\end{minipage}
$\textbf{SE. }\bigl(|X|=|Y| \wedge \forall i <|X|\left(X(i)=Y(i)\right)\bigr) \rightarrow X = Y$
\end{center}
\HRule
\caption{The \textit{2-BASIC} axioms}
\label{fig:2basic}
\end{figure}

In \cite[Chapter 5]{CN10}, it was shown that $V^{0}$ is finitely axiomatizable and  a $p$-bounded 
function is in $\mathsf{FAC}^{0}$ iff it is provably total  in $V^{0}$. A universally-axiomatized 
conservative extension $\overline{V^{0}}$  of $V^{0}$ was also obtained by introducing function symbols and their defining axioms for all $\mathsf{FAC}^{0}$ functions.

In  \cite[Chapter 9]{CN10}, Cook and Nguyen showed how to associate a theory $\VC$ to each 
complexity class $C\subseteq \P$, where $\VC$ extends $V^{0}$ with an additional axiom 
asserting the existence of a solution to a complete problem for $C$.  General techniques are also 
presented for defining a universally-axiomatized conservative extension $\overline{\VC}$ of $\VC$ 
which has function symbols and defining axioms for every function in $\mathit{FC}$, and $\VC$ 
admits induction on open formulas in this enriched vocabulary. It follows from Herbrand's Theorem 
that the provably-total functions in $\overline{\VC}$ (and  hence in $\VC$) are precisely the 
functions in $\mathit{FC}$.  Using this framework, Cook and Nguyen defined explicitly theories for 
various complexity classes within $\P$. 

Since we need some basic linear algebra in this paper, we are interested in the two-sorted theory $
\VsL$ and its universal conservative extension  $\overline{\VsL}$ from \cite{CF10}.  Recall that $\#
\L$ is usually defined as the class of functions $f$ such that for some nondeterministic logspace 
Turing machine $M$, $f(x)$ is the number of accepting computations of $M$ on input $x$. Since 
counting the number of accepting paths of  nondeterministic logspace is $\AC^{0}$-equivalent to 
matrix powering, $\VsL$ was defined to be the extension of the base theory $V^{0}$ with an 
additional axiom stating the existence of powers $A^{k}$ for every matrix
$A$ over $\mathbb{Z}$.  The closure of $\#\L$ under $\AC^0$-reductions
is called $\DET$.  It turns out that computing the determinant of integer
matrices is complete for $\DET$ under $\AC^{0}$-reductions. In fact
Berkowitz's algorithm can be used to reduce the determinant to matrix
powering.  Moreover, $\overline{\VsL}$ proves that the function $\Det$,
which computes the determinant of 
integer matrices based on Berkowitz's algorithm, is in the language of
$\overline{\VsL}$.   Unfortunately it is an open question whether
the theory $\overline{\VsL}$ also proves 
the \emph{cofactor expansion formula} and other basic properties of
determinants.  However from results in \cite{SC04} it follows that
$\overline{\VsL}$ proves that the usual properties of determinants
follow from the Cayley-Hamilton Theorem (which states that a matrix
satisfies its characteristic polynomial).

In this paper, we are particularly interested in  the theory $\VPV$ for polytime reasoning \cite
[Chapter 8.2]{CN10} since we will use it to formalize all of our
theorems.  The universal theory 
$\VPV$ is based on Cook's single-sorted theory $\PV$ \cite{Coo75}, which was  historically the first 
theory designed to capture polytime reasoning.   A nice property of $\PV$ (and $\VPV$) is that their
universal theorems translate into families of propositional tautologies with polynomial size proofs
in any extended Frege proof system.  

The vocabulary $\LFP$ of $\VPV$ extends that of $
\overline{V^{0}}$ with additional symbols introduced based on Cobham's machine independent 
characterization of $\FP$ \cite{Cob65}. Let $Z^{<y}$ denote the first $y$ bits of $Z$. Formally 
the vocabulary  $\LFP$ of $\VPV$ is the smallest set satisfying
\begin{enumerate}[(1)]
\item $\LFP$ contains the vocabulary of $\overline{V^{0}}$
\item For any two  function $G(\vec{x},\vec{X})$, $H(y,\vec{x},\vec{X},\vec{Z})$  over $\LFP$ and  a $\calf
{L}_{A}^{2}$-term $t = t(y,\vec{x},\vec{X})$, if $F$ is defined by \emph{limited recursion} from $G$, 
$H$ and $t$, i.e.,
\begin{align*}
F(0,\vec{x},\vec{X}) &= G(\vec{x},\vec{X}),\\
F(y+1,\vec{x},\vec{X}) &= H(y,\vec{x},\vec{X},F(y,\vec{x},\vec{X}))^{<t(y,\vec{x},\vec{X})},
\end{align*}
then $F\in \calf{L}_{\FP}$.
\end{enumerate}
We will often abuse the notation by letting $\LFP$ denote the set of function symbols in $\LFP$.

The theory $\VPV$ can then be defined to be the theory over $\LFP$ whose axioms are those of  $
\overline{V^{0}}$ together with defining axioms for every function symbols in $\LFP$. $\VPV$ 
proves the scheme $\Sigma_{0}^{B}(\LFP)$-\textit{COMP} and the following schemes
\begin{align*}
\Sigma_{0}^{B}(\LFP)\textit{-IND: }  &&\bigl(\varphi(0)\wedge \forall x \bigl(\varphi(x)\rightarrow \varphi
(x+1)\bigr)\bigr)\rightarrow \forall x \varphi(x)\\
\Sigma_{0}^{B}(\LFP)\textit{-MIN: }  &&\varphi(y)\rightarrow \exists x \bigl(\varphi(x)\wedge \neg 
\exists z<x \varphi(z)\bigr) \\
\Sigma_{0}^{B}(\LFP)\textit{-MAX:}
&&\varphi(0)\rightarrow \exists x \le y \bigl(\varphi(x)\wedge \neg \exists z\le y \bigl(z>x \wedge 
\varphi(z)\bigr)\bigr)
\end{align*}
where $\varphi$ is any $\Sigma_{0}^{B}(\LFP)$-formula.  It follows from Herbrand's Theorem that 
the provably-total functions in $\VPV$ are precisely the functions in $\LFP$.

Observe that $\VPV$ extends $\overline{\VsL}$ since matrix powering can easily be carried out in 
polytime, and thus all theorems of $\overline{\VsL}$ from \cite{CF10,SC04}
are also theorems of $\VPV$.  From results in \cite{SC04} (see 
page~44 of \cite{Jer05} for a correction) it
follows that $\VPV$ proves the Cayley-Hamilton Theorem, and hence
the cofactor expansion formula and other usual properties of determinants of integer matrices.

In our introduction, we mentioned $V^{1}$, the two sorted version of Buss's $S^{1}_{2}$ 
theory \cite{Bus86}. 
The theory $V^{1}$ is also associated with polytime reasoning in the sense that the provably total functions 
of $V^{1}$ are $\FP$ functions, and $V^{1}$ is $\Sigma_{1}^{B}$-conservative over $\VPV$. However,
there is evidence showing that $V^{1}$ is stronger than $\VPV$. For 
example,  the theory $V^{1}$ proves  the $\Sigma_{1}^{B}$\textit{-IND}, $\Sigma_{1}^{B}$\textit{-MIN} and $
\Sigma_{1}^{B}$\textit{-MAX} schemes while $\VPV$ cannot prove these $\Sigma_{1}^{B}$ 
schemes, assuming the polynomial hierarchy does not collapse
\cite{KPT91}.  In this paper we do not use $V^{1}$ to formalize our
theorems, since the weaker theory $\VPV$ suffices  for all our needs.

\subsection{Notation}
If $\varphi$ is a function of $n$ variables, then let $\varphi(\alpha_{1},\ldots,\alpha_{n-1},\bullet)$ 
denote the function of one variable resulting from $\varphi$ by fixing the first $n-1$ arguments  to 
$\alpha_{1},\ldots,\alpha_{n-1}$.   We write $\psi: \Delta \onto \Phi$ to denote that $\psi$ is a 
surjection from $\Delta$ \emph{onto} $\Phi$.

We use $[X,Y)$ to denote $\set{Z\in \mathbb{Z}\mid X\le Z<Y}$, i.e., the interval of integers between 
$X$ and $Y-1$, where strings code integers using signed binary notation.
We also use the standard notation  $[n]$ to denote the set $\set{1,\ldots,n}$.

Given a square matrix $M$, we write $M[i\mid j]$ to denote the $(i,j)$-\emph{minor} of $M$, i.e., the 
square matrix formed by removing the $i$th row and $j$th column from $M$. 

\sloppy
We write $\vec{x}$ to denote number sequence $\seq{x_{1},\ldots,x_{k}}$ and write $\vec{X}$ to denote 
string sequence $\seq{X_{1},\ldots,X_{k}}$. We write $\vec{x}_{k\times k}$ and  $\vec{X}_{k\times k}
$ to denote that $\vec{x}$ and $\vec{X}$ have $k^{2}$ elements and are treated as two-dimensional 
arrays $\seq{x_{i,j}\mid 1\le i,j\le k}$ and  $\seq{X_{i,j}\mid 1\le i,j\le k}$ respectively, where the 
elements of these two-dimensional arrays are listed by rows.
Note that $\vec{x}_{k \times k}$ and $\vec{X}_{k\times k}$ can be simply
encoded as integer matrices, and thus we will use 
matrix notation freely on them.
\fussy

We write the notation ``($T \proves$)'' in front of the statement of a theorem to indicate that the 
statement is formulated and proved within the theory $T$.

\subsection{The weak pigeonhole principle}
The \emph{surjective weak pigeonhole principle} for a function $F$,
denoted by  $\WPHP(F)$,
states that $F$ cannot map  $[0,nA)$ onto $[0,(n+1)A)$. 
Thus, the surjective weak pigeonhole principle for the class of $\VPV$ functions, denoted by $\WPHP(\LFP)$, is the schema \[\bset{\WPHP(F)\mid F\in \LFP}.\]

Note that this principle is believed to be  weaker than the usual surjective ``strong''  pigeonhole principle stating that we cannot map $[0,A)$  onto $[0,A+1)$.  For example, $\WPHP(\LFP)$ can be proven in the theory $V^{3}$ (the two-sorted version of Buss's theory $S^{3}_{2}$) for $\FP^{\Sigma_{3}^{\P}}$ reasoning (cf. \cite{Tha02}), but it is not known if the usual surjective pigeonhole principle for $\VPV$ functions can be proven within the theory $\bigcup_{i\ge 1} V^{i}$ for the polynomial hierarchy  (the two-sorted version of Buss's theory $S_{2}:= \bigcup_{i\ge 1} S_{2}^{i}$ in \cite{Bus86}).

\subsection{Je\v r\'abek's framework for probabilistic reasoning}
\label{s:jerabek}
In this section, we give a brief and simplified overview of Je\v r\'abek's framework \cite{Jer04,Jer05,Jer07} for 
probabilistic reasoning within $\VPV+\WPHP(\LFP)$. For more complete definitions and results, 
the reader is referred  to Je\v r\'abek's work.

Let $F(R)$ be a $\VPV$ 0-1 valued function (which may have other
arguments).  We think of $F$ as defining a relation on binary numbers
$R$.  Let $\Phi(n) = \set{R<2^{n}\mid F(R)=1}.$  Observe that bounding the
probability $\Prob_{R<2^{n}}\bigl[F(R)=1\bigr]$
from above by the ratio $s/t$ is the same as showing that 
$t\cdot |\Phi(n)| \le s\cdot 2^{n}$. More generally,
many probability inequalities can be restated as inequalities between
cardinalities of sets. 
This is problematic since even for the case of polytime definable 
sets,  it follows from Toda's theorem \cite{Tod91} that we cannot express 
their cardinalities directly using bounded
formulas (assuming that the polynomial hierarchy does not collapse).
Hence we need an alternative method 
to compare the sizes of definable sets without exact counting.

The method proposed by Je\v r\'abek in \cite{Jer04,Jer05,Jer07} is based on the following simple observation: 
if $\Gamma(n)$ and $\Phi(n)$ are definable sets and there is a function
$F$ mapping $\Gamma(n)$ onto $\Phi(n)$, then the cardinality of $\Phi(n)$
is at most the cardinality of $\Gamma(n)$. 
Thus instead of counting the sets $\Gamma(n)$ and $\Phi(n)$ directly,
we can compare the sizes of $\Gamma(n)$ and $\Phi(n)$ by showing the existence of a surjection $F$, which in many cases can be easily carried out within weak theories of bounded arithmetic. In this paper we will restrict our discussion to the case when the sets are bounded polytime definable sets and the surjections are polytime functions, all of which can be defined within
$\VPV$, since this is sufficient for our results.

The remaining challenge is then to formally verify that the definition of
cardinality comparison through the use of surjections is a meaningful
and well-behaved definition. The basic properties of surjections like ``any set can be mapped onto itself'' and ``surjectivity is preserved through function compositions'' roughly correspond to the usual reflexivity and transitivity of cardinality ordering, i.e.,  $|\Phi| \le |\Phi|$ and $|\Phi| \le |\Gamma| \le |\Lambda| \rightarrow |\Phi|\le |\Lambda|$ for all bounded definable sets $\Phi$, $\Gamma$ and $\Lambda$. However more sophisticated properties, e.g., dichotomy $|\Phi| \le |\Gamma| \vee |\Gamma| \le |\Phi|$ or ``uniqueness'' of cardinality, turn out to be much  harder to show. 

As a result, Je\v r\'abek proposed in \cite{Jer07} a  systematic and sophisticated framework to 
justify his definition of size comparison. He observed that estimating the size of a $\Ppoly$ 
definable set $\Phi\subseteq [0,2^{n})$ within an error $2^{n}/\poly(n)$ is the same as estimating  $
\Prob_{X \in [0,2^{n})} [X \in \Phi]$ within an error $1/\poly(n)$, which can be solved by drawing $
\poly(n)$ independent random samples $X \in [0,2^{n})$ and check if $X\in \Phi$. This gives us a 
polytime random sampling algorithm for approximating the size of $\Phi$. Since  a counting argument 
\cite{Jer04} can be formalized within $\VPV + \WPHP(\LFP)$ to show the existence of suitable 
average-case hard functions for constructing Nisan-Wigderson generators, this random sampling algorithm 
can be derandomized  to show the existence of an \emph{approximate 
cardinality}  $S$
of $\Phi$ for any given error $E = 2^{n}/\poly(n)$ in the following sense. 
The theory  $\VPV + \WPHP(\LFP)$ proves the existence of $S,y$ and
a pair of $\Ppoly$ ``counting functions'' $(F,G)$ 
\begin{align*}
F:\;&[0,y)\times \bigl(\Phi\uplus [0,E)\bigr)\onto [0,y\cdot S)\\
G:\;&\bigl[0,y\cdot (S+E)\bigr)\onto [0,y)\times \Phi
\end{align*}
Intuitively the pair $(F,G)$ witnesses
that $S-E \le |\Phi| \le S+E$. This allows him to show many properties, 
expected from cardinality comparison, that are satisfied by his method 
within $\VPV + \WPHP(\LFP)$ (see Lemmas 2.10 and 2.11 in \cite{Jer07}).
It is worth noting that proving the uniqueness of cardinality within some error seems
to be the best we can do within bounded arithmetic, where exact counting is not available.\\

For the present paper, the following definition is all we need to
know about Je\v r\'abek's framework.

\begin{defi}\label{def:prob}
Let $\Phi(n) = \{R< 2^{n}\mid F(R)=1\}$, where $F(R)$ is a \VPV\ function
(which may have other arguments) and let $s,t$ be $\VPV$ terms.
Then \[\Prob_{R<2^{n}}\bigl[R\in \Phi(n)]\precsim s/t\]
means that either $\Phi(n)$ is empty, or there exists a $\VPV$
function $G(n,\bullet)$ mapping the set
$[s]\times 2^{n}$ onto the set $[t]\times \Phi(n)$.
\end{defi}

Since we are not concerned with justifying the above definition,
our theorems can be formalized in $\VPV$ without $\WPHP$.

\section{Edmonds' Theorem}
Let $G$ be a bipartite graph with two disjoint sets of vertices $U=\set{u_1,\ldots,u_n}$ and 
$V=\set{v_1,\ldots,v_n}$. We use a pair $(i,j)$ to encode the edge $\set{u_{i},v_{j}}$ of $G$. 
Thus the edge relation of the graph $G$ can be encoded by a boolean matrix $E_{n \times n}$, where we define $(i,j)\in E$, i.e. $E(i,j)=1$, iff $\set{u_{i},v_{j}}$ is an edge of $G$. 

Each perfect matching in $G$ can be encoded by an $n\times n$ permutation matrix $M$ satisfying 
$M(i,j)\rightarrow E(i,j)$ for all $i,j\in [n]$. Recall  that  a \emph{permutation matrix} is a square 
boolean matrix that has exactly one entry of value $1$ in each row and each column and $0$'s 
elsewhere. 

Let $A_{n\times n}$ be the matrix obtained from $G$ by letting $A_{i,j}$ be an indeterminate 
$X_{i,j}$ for all $(i,j)\in E$, and let $A_{i,j}=0$ for all $(i,j)\not\in E$. The matrix of indeterminates 
$A(\vec{X})$ is called  the  \emph{Edmonds matrix} of $G$, and
$\Det(A(\vec{X}))$ is called the 
\emph{Edmonds polynomial}  of $G$.  In general this polynomial has
exponentially many monomials, so for the purpose of proving its properties
in $\VPV$ we consider $\Det(A(\vec{X}))$ to be a function which
takes as input an integer matrix $\vec{W}_{n\times n}$
and returns an integer $\Det(A(\vec{W}))$.   Thus
$\Det(A(\vec{X})) \equiv 0$ means that this function
is identically zero.

The following theorem draws an important   connection between
determinants and matchings.   The standard proof uses the \emph{Lagrange expansion}
which has exponentially many terms, and hence cannot be formalized in $\VPV$.  
However we will give an alternative proof which can be so formalized.

\begin{thm}[Edmonds' Theorem \cite{Edm67}] \label{theo:Edmonds}
($\VPV\proves$) Let $\Det(A(\vec{X}))$ be the Edmonds polynomial of
the bipartite graph $G$. Then $G$ has a perfect matching iff
$\Det(A(\vec{X})) \not\equiv 0$ (i.e. iff there exists an integer matrix
$\vec{W}$ such that 
$\Det(A(\vec{W})) \not= 0$).
\end{thm}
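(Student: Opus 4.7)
The plan is to prove the two directions of the biconditional separately, using only the cofactor (Laplace-along-a-row) expansion and the standard determinant identities, both of which the excerpt has already noted are available in $\VPV$ via the $\overline{\VsL}$-level proof of the Cayley--Hamilton Theorem from \cite{SC04}. In particular I never expand $\Det(A(\vec X))$ as a sum over permutations.

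For the easy direction, suppose $G$ has a perfect matching encoded as an $n\times n$ permutation matrix $M$ with $M(i,j)\rightarrow E(i,j)$. Take $\vec{W}:=M$. A $\Sigma_{0}^{B}(\LFP)$ entry-wise check shows $A(\vec W)=M$: on edges we have $A(\vec W)_{i,j}=W_{i,j}=M_{i,j}$, and on non-edges the contrapositive of $M(i,j)\rightarrow E(i,j)$ gives $A(\vec W)_{i,j}=0=M_{i,j}$. Since $M$ is a permutation matrix, $MM^{T}=I_{n}$ entry-wise. Applying multiplicativity of $\Det$ together with $\Det(M^{T})=\Det(M)$ (both provable in $\VPV$) gives $\Det(M)^{2}=1$, so $\Det(A(\vec W))=\pm 1\ne 0$ and the Edmonds polynomial is not identically zero.

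For the harder direction, suppose $\vec W$ satisfies $\Det(A(\vec W))\ne 0$, and extract a perfect matching by recursion on $n$. The cofactor expansion along the first row,
\[
\Det(A(\vec W))\;=\;\sum_{j=1}^{n}(-1)^{1+j}\,A(\vec W)_{1,j}\,\Det\bigl(A(\vec W)[1\mid j]\bigr),
\]
is an identity of $\VPV$, and its left-hand side being nonzero forces some column $j$ with $A(\vec W)_{1,j}\ne 0$ and $\Det(A(\vec W)[1\mid j])\ne 0$. The first condition says $(1,j)\in E$, and the minor $A(\vec W)[1\mid j]$ is precisely the Edmonds matrix of the bipartite subgraph $G'$ on $\{u_{2},\ldots,u_{n}\}$ versus $\{v_{1},\ldots,v_{n}\}\setminus\{v_{j}\}$ evaluated at the corresponding sub-array. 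Recursing on $G'$ yields a perfect matching, and prepending the edge $(u_{1},v_{j})$ gives one for $G$.

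To make the recursion formal in $\VPV$, I would package it as an $\LFP$ function $F(E,\vec W)$ defined by limited recursion on $n$: at each step pick the least $j$ satisfying the two nonzeroness conditions (a $\Sigma_{0}^{B}(\LFP)$-\textit{MIN} search) and recurse on the reduced edge matrix and the $(n-1)\times(n-1)$ sub-array of $\vec W$. Its output is polynomially bounded, so limited recursion applies. The correctness claim ``if $\Det(A(\vec W))\ne 0$ then $F(E,\vec W)$ encodes a permutation matrix $M$ with $M(i,j)\rightarrow E(i,j)$'' is then $\Sigma_{0}^{B}(\LFP)$ and hence accessible to the $\Sigma_{0}^{B}(\LFP)$-induction scheme of $\VPV$, the inductive step being a one-line application of the cofactor identity. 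The main obstacle is precisely the avoidance of Leibniz' formula: the whole reverse direction rests on having the $n$-term cofactor expansion (rather than the $n!$-term Leibniz one) available in $\VPV$, which the paper has already flagged as exactly what \cite{SC04} and Cayley--Hamilton furnish.
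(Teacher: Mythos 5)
Your proof is correct and follows essentially the same route as the paper's: the reverse direction is exactly the paper's construction (cofactor expansion along one row, least-index choice of a column $j$ with $B(i,j)\ne 0$ and $\Det(B[i\mid j])\ne 0$, packaged as a limited-recursion $\LFP$ function with bookkeeping to recover the original column indices), justified by the same observation that only the $n$-term cofactor identity, not the Leibniz expansion, is needed. The only divergence is in the forward direction, where the paper proves $\Det(M)\in\{-1,1\}$ by $\Sigma_{0}^{B}(\LFP)$ induction on successive minors of the permutation matrix (its Lemma~\ref{lem:permatrix}), while you derive it from $\Det(MM^{T})=\Det(M)^{2}=1$ via multiplicativity and the transpose identity --- both arguments rest on the ``usual properties of determinants'' that $\VPV$ obtains from the Cayley--Hamilton Theorem, so either works.
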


\begin{proof}
For the direction ($\Rightarrow$) we need the following lemma.

\begin{lem}\label{lem:permatrix}
($\VPV\proves$)
$\Det(M) \in \set{-1,1}$ for any  permutation matrix  $M$.
\end{lem}
\begin{proof}[Proof of \lemref{permatrix}]
We will construct a sequence of matrices
\[N_{n}, N_{n-1}, \ldots, N_{1},\]
where $N_{n}=M$, $N_1 = (1)$, and we construct $N_{i-1}$ from $N_{i}$ by choosing $j_{i}$ satisfying $N(i,j_{i})
=1$ and letting $N_{i-1} = N_{i}[i\mid j_{i}]$.

From the way the matrices $N_{i}$ are constructed, we can
easily show by $\Sigma_{0}^{B}(\LFP)$ induction on $\ell=n,\ldots,1$ that the matrices $N_{\ell}$
are permutation matrices.
Finally, using the cofactor expansion formula, we prove by $\Sigma_{0}^{B}(\LFP)$ induction on
$\ell=1,\ldots,n$  that  $\Det(N_{\ell})\in \set{-1,1}$.
\end{proof}

From the lemma we see that if $M$ is the permutation matrix 
representing a perfect matching of $G$, then $\VPV$ proves
$\Det(A(M)) = \Det(M) \in \{1,-1\}$,
so $\Det(A(\vec{X}))$ is not identically 0.

For the direction ($\Leftarrow$) it suffices to describe a polytime
function $F$ that takes as input an integer matrix
$B_{n\times n} = A(\vec{W}) $, where $A(\vec{X})$ is the Edmonds matrix
of a bipartite graph $G$ and $\vec{W}_{n\times n}$ is 
an integer value assignment, and  reason in $\VPV$ that if
$\Det(B)\not= 0$, then $F$ outputs a perfect matching of $G$.

Assume $\Det(B)\not=0$. Note that finding a perfect matching of $G$ is the same
as extracting a nonzero diagonal, i.e., a sequence of nonzero entries $B(1,\sigma(1)),B(2,\sigma(2)),\ldots,B(n,\sigma(n))$, where $\sigma$ is a permutation of the set $[n]$. 
For this purpose, we construct a sequence of matrices
\[B_{n},B_{n-1},\ldots, B_{1},\]
as follows. We let $B_{n}=B$. For $i=n,\ldots,2$, we let  $B_{i-1} = B_{i}[i\mid j_{i}]$ and 
the index $j_{i}$ is chosen using the following method. Suppose we already know $B_{i}$ satisfying 
$\Det(B_{i})\not=0$.  By the cofactor expansion along the last row of $B_{i}$, 
\[ \Det(B_{i}) = \sum_{j=1}^iB_i(i,j)(-1)^{i+j} \Det(B_{i}[i\mid j]).\]
Thus, since $\Det(B_{i})\neq 0$, at least one of the terms in the sum on the right-hand side is nonzero.
Thus, we can choose the least index $j_{i}$ such that $B_i(i,j_{i})\cdot \Det(B_{i}[i\mid j_{i}])\neq 0$.

To extract the perfect matching, we let $Q$ be an $n\times n$ matrix,
where $Q(i,j) = j$.  Then we construct a sequence of matrices 
\[Q_{n},Q_{n-1},\ldots, Q_{1},\]
where $Q_{n}=Q$ and $Q_{i-1}=Q_{i}[i\mid j_{i}]$, i.e., we delete from $Q_{i}$ exactly the row 
and column we deleted from $B_{i}$. 
We define a permutation $\sigma$ by letting 
$\sigma(i)=Q_{i}(i,j_{i})$. Then $\sigma(i)$ is the column
number in $B$ which corresponds to column $j_i$ in $B_i$, and
the set of edges
\[\bset{(i,\sigma(i))\mid 1\le i \le n}\] 
is our desired perfect matching. 
\end{proof}

\section{Schwartz-Zippel Lemma \label{sec:SZ}}
The Schwartz-Zippel Lemma  \cite{Sch80,Zip89} is one of the most 
fundamental tools in the design of randomized 
algorithms. The lemma provides us a $\coRP$ algorithm for the \emph{polynomial identity testing 
problem} (\textsc{Pit}): given an arithmetic circuit  computing a multivariate polynomial $P(\vec{X})$ 
over a  field $\mathbb{F}$, we want to determine if $P(\vec{X})$ is identically zero. 
The \textsc{Pit} problem is important since many problems, 
e.g., primality testing~\cite{AB03}, perfect matching \cite{MVV87}, and 
software run-time testing \cite{WB97}, can be reduced to \textsc{Pit}. Moreover, many fundamental results in complexity theory like $\IP=\PSPACE$ \cite{Sha92} and the PCP theorem \cite{Aro98,AS98} 
make heavy use of \textsc{Pit}  in their proofs. The Schwartz-Zippel lemma can be stated as follows.

\begin{thm}[Schwartz-Zippel Lemma] Let $P(X_1,\ldots,X_n)$ be a non-zero polynomial of 
degree $D \ge 0$ over a field (or integral domain)
$\mathbb{F}$. Let $S$ be a finite subset of $\mathbb{F}$ and let $\vec{R}$ denote the sequence $\seq{R_1,\ldots, R_n}$. Then 
\[\Prob_{\vec{R}\in S^{n}}\bigl[P(\vec{R})=0\bigr]\leq \frac{D}{|S|}.\]
\end{thm}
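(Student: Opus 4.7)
The plan is to prove the lemma by induction on $n$, the number of variables, regarding $P$ not as an expanded sum of monomials (whose size can be exponential in $n$) but as a polynomial in the designated variable $X_n$ with coefficients in $\mathbb{F}[X_1,\ldots,X_{n-1}]$; that grouping has at most $D+1$ pieces and so is feasibly expressible.

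For the base case $n=1$, I would show that a non-zero univariate polynomial $P(X)$ of degree $D$ over an integral domain has at most $D$ roots, by polynomial long division: if $r$ is a root then $(X-r)\mid P(X)$, giving $P(X)=(X-r)P_{1}(X)$ with $\deg P_{1}=D-1$, and iterating produces at most $D$ root-factorizations. For the inductive step ($n>1$) I would write $P(X_{1},\ldots,X_{n})=\sum_{i=0}^{d} X_{n}^{i}\,Q_{i}(X_{1},\ldots,X_{n-1})$ with $d=\deg_{X_{n}}P$, so $Q_{d}\not\equiv 0$ and $\deg Q_{d}\le D-d$. For each fixed $\vec{R}'=(R_{1},\ldots,R_{n-1})\in S^{n-1}$, if $Q_{d}(\vec{R}')\neq 0$ then $P(\vec{R}',X_{n})$ is a non-zero univariate polynomial of $X_{n}$-degree exactly $d$, hence has at most $d$ roots in $S$; otherwise I would bound the root count trivially by $|S|$. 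Applying the induction hypothesis to $Q_{d}$ (which has $n-1$ variables and degree $\le D-d$) bounds the number of $\vec{R}'$ with $Q_{d}(\vec{R}')=0$ by $(D-d)\,|S|^{n-2}$. A union bound then gives at most $d\,|S|^{n-1}+(D-d)\,|S|^{n-1}=D\,|S|^{n-1}$ zeros in $S^{n}$, i.e.\ $\Prob[P(\vec{R})=0]\le D/|S|$. In Je\v r\'abek's framework this would be phrased as building a surjection $[D]\times S^{n}\onto [|S|]\times\{\vec{R}\in S^{n}:P(\vec{R})=0\}$, assembled by combining a univariate root-listing surjection with the surjection provided by the inductive hypothesis for $Q_{d}$.

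The main obstacle, as the authors emphasize in the introduction, is the $\VPV$-formalization: extracting $Q_{d}$ symbolically is problematic since the monomial expansion of $P$ can be exponentially large. My plan is to keep each $Q_{i}$ implicit and access it only through evaluation---to compute $Q_{i}(\vec{w})$ one evaluates $P(\vec{w},X_{n})$ at $D+1$ distinct values of $X_{n}$ and interpolates in $X_{n}$, reading off the coefficient of $X_{n}^{i}$, which is polytime given a circuit for $P$ and $D$ polynomial. The subtler point is pinning down the true $X_{n}$-degree $d$, since this is itself a polynomial identity test on $Q_{d+1},\ldots,Q_{D}$. I would exploit a witness $\vec{w}$ with $P(\vec{w})\neq 0$ (implicit in the non-vanishing hypothesis): the univariate slice $P(w_{1},\ldots,w_{n-1},X_{n})$ has some observable degree $e\le d$, and $Q_{e}$ is certifiably non-zero at $(w_{1},\ldots,w_{n-1})$, supplying both a witness for the recursion and a non-zero coefficient to single out. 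Reconciling the observed $e$ with the true leading degree $d$ is the delicate point, since the counting argument is tight only when $e=d$; the plan is either to refine the witness until $e=d$ (by locally perturbing within $S^{n-1}$) or to restate the induction so that it operates on any certifiably non-zero $Q_{e}$ together with a side bound on the residual $X_{n}$-degree, and check that the counting still closes. This semantic-versus-syntactic reconciliation of the leading coefficient appears to be the real crux of transplanting the proof into $\VPV$.
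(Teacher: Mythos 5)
Your classical induction on the number of variables is the standard proof of the Schwartz--Zippel Lemma, and as ordinary mathematics it is correct (base case by counting roots of a univariate polynomial over an integral domain, inductive step by splitting on whether the leading coefficient $Q_d$ vanishes, union bound $d\,|S|^{n-1}+(D-d)\,|S|^{n-1}=D\,|S|^{n-1}$). But note that the paper does not prove this statement at all: it is quoted as background with citations to Schwartz and Zippel, the authors say immediately afterwards that the lemma ``seems hard to prove in bounded arithmetic,'' and the conclusion explicitly lists the $\VPV$-formalization of the full lemma as an open problem. What the paper actually formalizes (Theorems~\ref{theo:Knn} and~\ref{theo:SZEd}) is only the special case where $P$ is an Edmonds polynomial $\Det(A(\vec{X}))$, and the argument there is not your variable-by-variable induction. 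They build the surjection $[n]\times S^{n^2-1}\onto\calf{Z}(n,s)$ in one shot: for a zero $C$ of the determinant, the least index $i$ with $\Det(C_i)=0$ (taken along the principal minors, or along a known nonzero diagonal in the general case), together with the cofactor expansion, shows that the entry $C(i,i)$ is uniquely recoverable from the remaining $n^2-1$ entries, so deleting that entry yields the preimage. This exploits the fact that each variable occurs in exactly one entry of the matrix, and it sidesteps both the monomial expansion and any recursion on $n$.

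Regarding your $\VPV$ plan, you have correctly located the crux, but as it stands it is a gap rather than a proof. The inductive step needs a certified witness that the \emph{true} leading coefficient $Q_d$, for the exact $X_n$-degree $d$, is a non-zero polynomial; producing such a witness is itself a polynomial identity testing instance. Working instead with the observed slice degree $e$ does not close the count: on slices where $Q_e(\vec{R}')\neq 0$ the univariate polynomial $P(\vec{R}',X_n)$ can still have up to $d>e$ roots, so the bound degrades to $d\,|S|^{n-1}+(D-e)\,|S|^{n-1}>D\,|S|^{n-1}$, and ``refining the witness by local perturbation'' is precisely the step with no known feasible implementation. A second difficulty is that the induction on $n$ quantifies over the existence of a surjection at each level, which is not a $\Sigma_0^B(\LFP)$ statement unless the whole family of surjections (and the whole chain of leading-degree witnesses $d_n,d_{n-1},\ldots$) is constructed uniformly in advance. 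Your proof is therefore the right proof of the classical theorem, but its feasibilization is exactly the open problem the paper declines to solve, which is why the authors retreat to the Edmonds-polynomial special case.
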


Using this lemma, we have the following $\coRP$ algorithm for the
\textsc{Pit} problem when $\mathbb{F} = \mathbb{Z}$.
Given a polynomial  \[P(X_1,\ldots,X_n)\] of degree at most $D$, we choose a sequence 
$\vec{R}\in [0,2D)^{n}$ at random.  If $P$ is given implicitly as a
circuit, the degree of $P$ might be exponential, and thus the value of
$P(\vec{R})$ might require exponentially many bits to encode.  In this
case we use the method of Ibarra and Moran \cite{IM83} and let $Y$ be
the result of evaluating $P(\vec{R})$ using arithmetic modulo a random
integer from the interval $[1,D^k]$ for some fixed $k$.  If $Y=0$, then
we report  that $P\equiv 0$. Otherwise, we report that $P\not\equiv 0$.
(Note that if $P$ has small degree, then we can evaluate $P(\vec{R})$
directly.)

Unfortunately the Schwartz-Zippel Lemma seems hard to prove in 
bounded arithmetic. The main challenge is that the degree of $P$ can be exponentially large. Even in the special case when $P$ is given as the 
symbolic determinant of a matrix of indeterminates and hence the degree of $P$ is small, the 
polynomial $P$ still has up to $n!$ terms. Thus, we will focus on a  much weaker version of the 
Schwartz-Zippel Lemma that involves only Edmonds' polynomials since this will suffice for us 
to  establish the correctness of a $\FRNC^{2}$ algorithm for deciding if a bipartite graph 
has a perfect matching.

\subsection{Edmonds' polynomials for complete bipartite graphs}

In this section we will start with the simpler case when every entry of  an Edmonds matrix is a 
variable, since it clearly demonstrates our techniques.
This case corresponds to the Schwartz-Zippel Lemma for Edmonds'
polynomials of complete bipartite graphs.

Let $A$ be the full $n \times n$ Edmonds' matrix $A$, where
$A_{i,j} = X_{i,j}$ for all $1\le i,j \le n$.   We consider the case
that $S$ is the interval of integers $S = [0,s)$
for $s \in \mathbb{N}$, so $|S|=s$.  Then $\Det(A(\vec{X}))$ is a
nonzero polynomial of degree exactly $n$, and we want to show that 
\begin{align*}
\Prob_{\vec{r}\in S^{n^{2}}}\bigl[\Det(A(\vec{r}))=0\bigr] \precsim \frac{n}{s}. 
\end{align*}
Let  
\[\calf{Z}(n,s):=\bset{\vec{r}\in S^{n^2}\mid \Det(A(\vec{r}))=0},\]
i.e., the set of zeros of the Edmonds 
polynomial $\Det(A(\vec{X}))$. Then by \defref{prob}, it suffices to
exhibit a $\VPV$ function mapping 
$[n]\times S^{n^{2}}$  onto $S\times \calf{Z}(n,s)$.  For this it
suffices to give a $\VPV$ function mapping 
$[n]\times S^{n^{2}-1}$ onto $\calf{Z} (n,s)$.
We will define a $\VPV$ function 
\[F(n,s,\bullet):[n]\times S^{n^2-1} \onto \calf{Z}(n,s),\]
so $F(n,s,\bullet)$ takes as
input a pair $(i,\vec{r})$, where $i\in [n]$ and 
$\vec{r}\in S^{n^2-1}$ is a sequence of $n^{2}-1$ elements. 

Let $B$ be an $n\times n$ matrix with elements from $S$.
For $i\in [n]$ let $B_i$ denote the
leading   principal submatrix of $B$ that consists of the $i\times i$
upper-left part of $B$. In other words,
$B_{n} = B$, and $B_{i-1}:=B_{i}[i\mid i]$ for
$i=n,\ldots,2$.  The following fact follows easily from the least
number principle $\Sigma_{0}^{B}(\LFP)\textit{-MIN}$.

\begin{fact}\label{f:detB}($\VPV\proves$)
If $\Det(B)=0$, then there is $i\in [n]$ such that
$\Det(B_j)=0$ for all $i\le j\le n$, and either $i=1$ or $i>1$ and
$\Det(B_{i-1})\ne 0$.
\end{fact}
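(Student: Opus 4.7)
The plan is to apply the number minimization schema $\Sigma_{0}^{B}(\LFP)\textit{-MIN}$ to the formula asserting that all principal submatrices of size at least $i$ have zero determinant, and then read off the required dichotomy from minimality.

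More concretely, first I would note that since $\Det$ is a symbol of $\LFP$ (via Berkowitz's algorithm, as discussed in the preliminaries), and since the map $j \mapsto B_j$ is trivially an $\LFP$ operation (it just extracts the upper-left $j \times j$ block of $B$), the formula
\[\psi(i) \;:\equiv\; \forall j \le n\,\bigl(i \le j \rightarrow \Det(B_j)=0\bigr)\]
is $\Sigma_{0}^{B}(\LFP)$, so $\Sigma_{0}^{B}(\LFP)\textit{-MIN}$ is available for it.

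Next I would observe that the hypothesis $\Det(B)=0$ together with $B_n = B$ gives $\psi(n)$, so the set of $i \in [n]$ satisfying $\psi(i)$ is nonempty. By $\Sigma_{0}^{B}(\LFP)\textit{-MIN}$, pick the least such $i$. By definition of $\psi$, this $i$ already satisfies $\Det(B_j)=0$ for all $i \le j \le n$.

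Finally, the dichotomy: if $i = 1$ we are done. Otherwise $i > 1$ and by minimality $\psi(i-1)$ fails, i.e., there exists some $j$ with $i-1 \le j \le n$ and $\Det(B_j) \ne 0$. Since $\psi(i)$ holds, every $j$ with $i \le j \le n$ satisfies $\Det(B_j)=0$, so the witness $j$ must be exactly $i-1$, giving $\Det(B_{i-1}) \ne 0$ as required. I do not anticipate any real obstacle here: everything rests on the availability of $\Sigma_{0}^{B}(\LFP)\textit{-MIN}$ in $\VPV$ and the fact that $\Det$ is an $\LFP$ function, both of which were established earlier in the paper.
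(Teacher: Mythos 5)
Your proof is correct and is essentially the paper's own argument: the paper simply asserts that the fact ``follows easily from the least number principle $\Sigma_{0}^{B}(\LFP)\textit{-MIN}$,'' and your write-up is exactly the expected expansion of that remark, applying $\textit{MIN}$ to the $\Sigma_{0}^{B}(\LFP)$ formula $\psi(i)$ and reading off the dichotomy from minimality. (The only cosmetic point is to restrict $\psi$ to $i\in[n]$ so that the undefined $B_0$ never arises, which you already do implicitly.)
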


We claim that given $\Det(B)=0$ and given $i$ as in the fact,
the element $B(i,i)$
is uniquely determined by the other elements in $B$.
Thus if $i=1$ then $B(i,i)=0$, and if $i>1$ then by the cofactor
expansion of $\Det(B_i)$ along row $i$,
\begin{equation}\label{e:fix}
     0 = \Det(B_i) = B_i(i,i) \cdot \Det(B_{i-1}) + \Det(B'_i)
\end{equation}
where $B'_i$ is obtained from $B_i$ by setting $B'(i,i)=0$.
This equation uniquely determines $B_i(i,i)$ because $\Det(B_{i-1})\ne 0$.

The output of $F(n,s,(i,\vec{r}))$ is defined as follows.
Let $B$ be the $n\times n$ matrix determined by the $n^2-1$ elements
in $\vec{r}$ by inserting the symbol $\ast$ (for unknown)
in the position for $B(i,i)$.  Try to use the method above to determine
the value of $\ast= B_i(i,i)$, assuming that $\ast$ is chosen so that
$\Det(B)=0$.  This method could fail because $\Det(B_{i-1}) =0$.
In this case, or if the solution to the equation (\ref{e:fix})
gives a value for $B_i(i,i)$ which is not in $S$,
output the default ``dummy'' zero sequence $\vec{0}_{n\times n}$. 
Otherwise let $C$ be $B$ with $\ast$ replaced by the obtained value of
$B_i(i,i)$.  If $\Det(C) = 0$ then output $C$, otherwise output the
dummy zero sequence.

\begin{thm}\label{theo:Knn}($\VPV\proves$) 
Let $A(\vec{X})$ be the Edmonds matrix of a complete bipartite graph $K_{n,n}$. Let $S$ denote the 
set $[0,s)$. Then the function $F(n,s,\bullet)$ defined above is a
polytime surjection that maps
$[n]\times S^{n^2-1}$ onto  $\calf{Z}(n,s)$.
\end{thm}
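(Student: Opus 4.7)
The plan is to verify three things about $F$: that it is polytime, that its range lies in $\calf{Z}(n,s)$, and that it is surjective. The first two follow immediately from the construction: the Berkowitz-based $\Det$ function is in $\LFP$ (hence every step of $F$ — building the candidate matrix, computing $\Det(B_{i-1})$, solving a single linear equation, checking membership in $S$, and testing $\Det(C)$ — is $\LFP$-definable), and every output of $F$ is either the dummy zero matrix (whose determinant is $0$) or a matrix $C$ whose determinant has been explicitly verified to be $0$, with all entries lying in $S$. So the real work is in proving surjectivity.

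For surjectivity I take an arbitrary $B\in\calf{Z}(n,s)$ and exhibit a preimage. Applying \mbox{Fact~\ref{f:detB}} (which rests on $\Sigma_0^B(\LFP)$\textit{-MIN}, available in $\VPV$) gives an index $i\in[n]$ with $\Det(B_j)=0$ for all $j\ge i$, and either $i=1$ or $\Det(B_{i-1})\ne 0$. Let $\vec{r}\in S^{n^2-1}$ list the $n^2-1$ entries of $B$ in row-major order, omitting position $(i,i)$. I claim $F(n,s,(i,\vec{r}))=B$. If $i=1$, then $\Det(B_1)=B(1,1)=0$, so the default value $0$ that $F$ plugs in agrees with the original entry, and the reconstructed $C$ coincides with $B$. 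If $i>1$, equation~(\ref{e:fix}) holds in $B$ and, since $\Det(B_{i-1})\ne 0$, uniquely determines $B_i(i,i)$; this is exactly the value $F$ recovers by solving the equation, and it lies in $S$ because it is the original entry of $B$. Hence the method does not fall through to the dummy case, and $F$ outputs $B$. A minor sanity check to record: the dummy $\vec{0}_{n\times n}$ is itself in the range (take $i=1$ and $\vec{r}=\vec{0}$), so routing failures to the dummy never shrinks the range.

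The main obstacle is to keep every step inside $\VPV$. This is handled by three ingredients already present in the preliminaries: the cofactor expansion formula used to derive equation~(\ref{e:fix}), which $\VPV$ proves via Cayley--Hamilton using the results quoted from \cite{SC04}; the principal-minor sequence $B_n,\ldots,B_1$ together with the values $\Det(B_j)$, both definable by limited recursion and hence in $\LFP$; and the $\Sigma_0^B(\LFP)$\textit{-MIN} scheme used inside Fact~\ref{f:detB}. With these in hand, every case distinction in the argument is a bounded $\LFP$-predicate, and the verification of surjectivity reduces to a routine formula manipulation that $\VPV$ can perform without any induction over exponentially large objects such as the Lagrange expansion.
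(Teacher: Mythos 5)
Your proposal is correct and follows essentially the same route as the paper's proof: surjectivity is established by taking an arbitrary $C\in\calf{Z}(n,s)$, invoking Fact~\ref{f:detB} to obtain the index $i$, deleting the $(i,i)$ entry to form $\vec{r}$, and observing that $F(n,s,(i,\vec{r}))$ recovers the missing entry via equation~(\ref{e:fix}) (or the trivial case $i=1$) and outputs $C$. Your additional bookkeeping about formalizability in $\VPV$ and the dummy element is consistent with, and slightly more explicit than, what the paper records.
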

\begin{proof} It is easy to see that $F(n,s,\bullet)$ is polytime
(in fact it belongs to the complexity class $\DET$).
To see that $F$ is surjective, let $C$ be an arbitrary matrix in
$\calf{Z} (n,s)$, so $\Det(C)=0$.  Let $i\in [n]$ be determined
by Fact \ref{f:detB} when $B=C$.  Let $\vec{r}$ be the sequence
of $n^2-1$ elements consisting of the rows of $C$ with $C(i,i)$
deleted.  Then the algorithm for computing $F(n,s,(i,\vec{r}))$
correctly computes the missing element $C(i,i)$ and outputs $C$.
\end{proof}

\subsection{Edmonds' polynomials for general bipartite graphs}
For general bipartite graphs, an entry of an Edmonds matrix $A$ might be
$0$, so we cannot simply use leading principal submatrices in our
construction of the surjection $F$. 
However given a sequence $\vec{W}_{n\times n}$ making $\Det(A(\vec{W}))\not=0$, it follows from 
\theoref{Edmonds} that we can find a perfect matching $M$ in polytime. Thus, the nonzero 
diagonal corresponding to the perfect matching $M$ will play the role of the main diagonal in
our construction. The rest of the proof  will proceed similarly.
Thus, we have the following theorem.

\begin{thm}\label{theo:SZEd}
($\VPV \proves$)
There is a $\VPV$ function $H(n,s,A,\vec{W},\bullet)$ where $A_{n\times n}$ is
the  Edmonds matrix for an arbitrary bipartite graph
and $\vec{W}$ is a sequence of $n^2$ (binary) integers, such that if
$\Det(A(\vec{W}))\ne 0$ then $H(n,s,A,\vec{W},\bullet)$ 
maps $[n]\times S^{n^2-1}$ onto
$\set{\vec{r}\in S^{n^2}\mid  \Det(A(\vec{r}))=0}$, where $S=[0,s)$. 
\end{thm}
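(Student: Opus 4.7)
The plan is to mimic the construction of Theorem \ref{theo:Knn}, with the role played there by the main diagonal of the matrix now played by the edges of a perfect matching of $G$ extracted from $\vec{W}$. Since $\Det(A(\vec{W})) \neq 0$, the polytime function from the ($\Leftarrow$) direction of Theorem \ref{theo:Edmonds} produces a permutation $\sigma \colon [n] \to [n]$ such that each $(i, \sigma(i))$ is an edge of $G$, i.e., $A_{i, \sigma(i)} = X_{i, \sigma(i)}$ is an indeterminate of the Edmonds matrix. For any $n\times n$ matrix $B$, the column-permuted matrix $B'$ defined by $B'(k, j) := B(k, \sigma(j))$ satisfies $B'(k,k) = B(k, \sigma(k))$ and $\Det(B') = \mathsf{sgn}(\sigma) \cdot \Det(B)$, an identity which $\VPV$ proves from the cofactor expansion formula it inherits from $\overline{\VsL}$.

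On input $(i, \vec{r}') \in [n] \times S^{n^2-1}$ the function $H$ first recovers $\sigma$ from $\vec{W}$ and builds an $n \times n$ matrix $C$ over $S \cup \set{\ast}$ by placing the $n^2-1$ entries of $\vec{r}'$ at the $n^2-1$ positions other than $(i, \sigma(i))$, marking that position with the unknown $\ast$. Let $B := A(C)$, so that $B$ agrees with $C$ at the edge positions of $G$ (including the marker at $(i, \sigma(i))$) and is $0$ elsewhere, and let $B'$ be the $\sigma$-column-permutation of $B$, which places $\ast$ at position $(i,i)$ of $B'$. Now we replay the argument of Theorem \ref{theo:Knn} on $B'$: if $i = 1$ the constraint $\Det(B') = 0$ forces $\ast = 0$; if $i > 1$ and $\Det(B'_{i-1}) \neq 0$, the cofactor expansion along row $i$ of $B'_i$ (as in equation (\ref{e:fix})) uniquely determines $\ast$. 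If the resulting value of $\ast$ lies in $S$ and the completed matrix satisfies $\Det(A(C)) = 0$, output $C$; otherwise output the dummy $\vec{0}_{n \times n}$.

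For surjectivity, let $C \in S^{n^2}$ with $\Det(A(C)) = 0$ and let $A(C)'$ be its column-permutation by $\sigma$, which also has determinant $0$. Applying Fact \ref{f:detB} to $A(C)'$ yields an index $i \in [n]$ such that $\Det(A(C)'_j) = 0$ for all $j \ge i$ and either $i = 1$ or $\Det(A(C)'_{i-1}) \neq 0$. Because $(i, \sigma(i))$ is an edge, the entry $C(i, \sigma(i))$ appears verbatim in $A(C)$, so if we take $\vec{r}'$ to be the remaining $n^2-1$ entries of $C$, the algorithm on input $(i, \vec{r}')$ correctly recovers $C(i, \sigma(i))$ from the same cofactor equation and outputs $C$. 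All steps are polytime, hence $H$ is a $\VPV$ function; the main subtlety — avoiding any appeal to the Lagrange expansion — is handled exactly as in Theorem \ref{theo:Knn} once the column permutation by $\sigma$ has reduced the general-bipartite problem to the principal-submatrix setup of the complete case.
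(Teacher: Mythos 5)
Your proof is correct and follows essentially the same route as the paper: both reduce to Theorem~\ref{theo:Knn} by letting the nonzero diagonal of a perfect matching extracted from $\vec{W}$ (via Theorem~\ref{theo:Edmonds}) play the role of the main diagonal. The paper phrases this by directly redefining the nested submatrix sequence $B_n,\ldots,B_1$ through the deletions $B_{i-1}=B_i[i\mid j_i]$, whereas you first apply the explicit column permutation by $\sigma$ and then reuse the leading principal submatrices; the two constructions coincide up to the factor $\mathsf{sgn}(\sigma)$, which is irrelevant for testing whether determinants vanish.
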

In other words, it follows from \defref{prob} that the function $H(n,s,A,\vec{W},\bullet)$ in
 the theorem witnesses that
\[\Prob_{\vec{r}\in S^{n^{2}}}\bigl[\Det(A(\vec{r}))=0\bigr] \precsim \frac{n}{s}.\]

\begin{proof}
Assume $\Det(A(\vec{W}))\ne 0$.   Then the polytime function described in
the proof of \theoref{Edmonds} produces an $n\times n$ permutation
matrix $M$ such that for all $i,j\in [n]$, if $M(i,j)=1$ then 
the element $A(i,j)$ in the Edmonds matrix $A$ is not zero.
We apply the algorithm in the proof of Theorem \ref{theo:Knn}, except
that the sequence of principal submatrices of $B$ used in
Fact~\ref{f:detB} is replaced by the sequence $B_n,B_{n-1},\ldots,B_1$
determined by $M$ as follows.  We let $B_n = B$, and for $i=n,\ldots,2$ we let
$B_{i-1} = B_i[i\mid j_i]$, where the indices $j_i$ are chosen the same way as in 
the proof of \theoref{Edmonds} when constructing the perfect matching $M$.
\end{proof}

We note that the mapping $H(n,s,A,\bullet)$ in this case may not be
in $\DET$ since the construction
of $M$ depends on the sequential polytime algorithm from
\theoref{Edmonds} for extracting a perfect matching.

\subsection{Formalizing the $\RNC^{2}$ algorithm for the bipartite perfect 
matching decision problem} \label{s:formBi}

An instance of the \emph{bipartite perfect matching decision} problem
is a bipartite graph $G$ encoded by a matrix $E_{n\times n}$, and we
are to decide if $G$ has a perfect matching.
Here is an $\RDET$ algorithm for the problem. The algorithm is essentially
due to Lov\'asz \cite{Lov79}.  From $E$, construct 
the  Edmonds matrix $A(\vec{X})$ for $G$ and choose a random sequence 
$\vec{r}_{n\times n}\in [2n]^{n^{2}}$. If $\Det(A(\vec{r}))\ne 0$
then we report that $G$ has a perfect matching. Otherwise, we report
$G$ does not have a perfect matching. 

We claim that $\VPV$ proves correctness of this algorithm.
The correctness assertion states that if $G$ has a perfect matching
then the algorithm reports NO with probability at most 1/2, and
otherwise it certainly reports NO.  \theoref{Edmonds} shows that
$\VPV$ proves the latter.   Conversely, if $G$ has a perfect matching given
by a permutation matrix $M$ then the function $H(n,2n,A,M,\bullet)$ 
of \theoref{SZEd} witnesses that the probability of $\Det(A(\vec{r}))=0$
is at most $1/2$, according to Definition \ref{def:prob}, where
$A$ is the Edmonds matrix for $G$.  Hence $\VPV$ proves the correctness
of this case too.

Since $\RDET\subseteq \FRNC^{2}$, this algorithm (which solves a
decision problem) is also an $\RNC^{2}$ algorithm.

\section{Formalizing the Hungarian algorithm}
The Hungarian algorithm is a combinatorial optimization algorithm which solves the \emph
{maximum-weight bipartite matching} problem in polytime and anticipated the later development 
of the powerful \emph{primal-dual method}. The algorithm was developed  by Kuhn \cite{Kuh55}, 
who gave  the name ``Hungarian method'' since it was based on the earlier work of two 
Hungarian mathematicians: D. K\H{o}nig and J. Egerv\'ary. Munkres later reviewed the algorithm 
and showed   that it is indeed polytime  \cite{Mun57}. 
Although the Hungarian algorithm is interesting by itself, 
we formalize the algorithm since we need it in the $\VPV$ proof of the 
Isolating Lemma for perfect matchings in \secref{iso}.

The Hungarian algorithm finds a maximum-weight matching for any
weighted bipartite graph.  The algorithm and its correctness proof are
simpler if we make the  two  following
changes. First, since edges with negative weights can never be in a maximum-weight matching, and thus can be safely deleted, we can assume that every edge has
\emph{nonnegative} weight. Second, by assigning zero 
weight to  every edge  not present, we only need to consider weighted
\emph{complete} bipartite graphs. 

Let $G=(X \uplus Y, E)$ be  a  complete bipartite graph, where $X=\set{x_{i}\mid 1\le i\le n}$ and 
$Y=\set{y_{i} \mid 1\le i\le n}$, and let $\vec{w}$ be an integer weight 
assignment to the edges of $G$, where $w_{i,j}\ge 0$
is the weight of the edge $\set{x_{i},y_{j}}\in E$.  

A pair of integer sequences  $\vec{u}=\seq{u_{i}}_{i=1}^{n}$ and $\vec{v}=\seq{v_{i}}_{i=1}^{n}$ is 
called a \emph{weight cover} if 
\begin{align}
\forall i,j\in [n],\, w_{i,j}\le u_{i} + v_{j}. \label{eq:cover}
\end{align}
The \emph{cost} of a cover is $\cost(\vec{u},\vec{v}) \df \sum_{i=1}^{n} (u_{i} + v_{i})$.
We also define $w(M)\df \sum_{(i,j)\in M} w_{i,j}$. The Hungarian algorithm is based on the 
following important observation.

\begin{lem} \label{lem:hung}
($\VPV\proves $) For any matching $M$ and weight cover $(\vec{u},\vec{v})$, we 
have 
$w(M) \le \cost(\vec{u},\vec{v}).$
\end{lem}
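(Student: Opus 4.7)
The plan is to chain the standard weak-duality inequalities
\[ w(M) \;\le\; \sum_{(i,j)\in M}(u_i+v_j) \;\le\; \cost(\vec{u},\vec{v}). \]
Both sides are polytime computable from $M,\vec w,\vec u,\vec v$, so the intermediate sum is definable in $\LFP$, and each inequality can be proved by a $\Sigma_{0}^{B}(\LFP)$ induction.

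For the first inequality, I would introduce the partial-sum function $D(k)$ that sums $(u_i+v_j)-w_{i,j}$ over the first $k$ edges of $M$ (enumerated in some canonical polytime order on $[n]\times[n]$). Then $D(0)=0$, and each increment is nonnegative by the cover condition (\ref{eq:cover}). A single induction on $k$ up to $|M|$ gives $D(|M|)\ge 0$, which is exactly the first inequality.

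For the second inequality, let $I\subseteq[n]$ be the rows matched by $M$ and $J\subseteq[n]$ the columns matched by $M$. A routine induction over the edges of $M$, using the matching property (each row and each column appears at most once), shows in $\VPV$ that $\sum_{(i,j)\in M}(u_i+v_j)=\sum_{i\in I}u_i+\sum_{j\in J}v_j$ and that $|[n]\setminus I|=|[n]\setminus J|$. I will then fix any canonical polytime bijection $\pi\colon [n]\setminus I \to [n]\setminus J$, e.g.\ the order-preserving one. Since the graph is complete and all weights are assumed nonnegative, (\ref{eq:cover}) applied at each pair $(i,\pi(i))$ gives $u_i+v_{\pi(i)}\ge w_{i,\pi(i)}\ge 0$, and summing over $i\in[n]\setminus I$ yields
\[ \sum_{i\notin I}u_i+\sum_{j\notin J}v_j \;\ge\; 0. \]
Adding this to the rewritten matching sum produces exactly $\cost(\vec{u},\vec{v})$, closing the chain.

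The main obstacle is not the mathematics but the bookkeeping inside $\VPV$: one has to ensure that the enumeration of the edges of $M$, the characteristic sequences of $I$ and $J$, and the bijection $\pi$ are all given explicitly as $\LFP$ functions, and that each induction step is stated by a $\Sigma_{0}^{B}(\LFP)$-formula. These are standard manipulations for polytime combinatorics in $\VPV$ (comprehension gives $I$ and $J$; the order-preserving bijection between two equinumerous subsets of $[n]$ is itself in $\LFP$), so no new principle beyond what is already available should be required.
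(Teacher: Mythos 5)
Your proof is correct and follows essentially the same route as the paper's: sum the cover constraints over the edges of $M$ to get $w(M)\le\sum_{(i,j)\in M}(u_i+v_j)$, then use that $u_i+v_j\ge w_{i,j}\ge 0$ for \emph{all} pairs (completeness plus nonnegative weights) to bound that sum by $\cost(\vec{u},\vec{v})$. You are in fact slightly more careful than the paper, which asserts the second inequality directly, whereas you make explicit the bijection between unmatched rows and unmatched columns needed to see that the leftover terms $\sum_{i\notin I}u_i+\sum_{j\notin J}v_j$ are nonnegative.
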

\begin{proof} Since the edges in a matching  $M$ are disjoint, summing the constraints $w_{i,j}\le 
u_{i} + v_{j}$ over all edges of $M$ yields $w(M)  \le \sum_{(i,j)\in M} (u_{i}+v_{j})$.
Since no edge has negative weight, we have $u_{i}+v_{j}\ge 0$ for all $i,j\in [n]$. Thus, 
\[w(M)  \le \sum_{(i,j)\in M} (u_{i}+v_{j}) \le  \cost(\vec{u},\vec{v})\]
for every matching $M$ and every  weight cover $(\vec{u},\vec{v})$.
\end{proof}

Given a weight cover $(\vec{u},\vec{v})$, the \emph{equality subgraph} $H_{\vec{u},\vec{v}}$ 
is the subgraph of $G$ whose vertices are $X\uplus Y$ and whose edges are precisely 
those $\set{x_{i},y_{j}}\in E$ satisfying $w_{i,j} = u_{i}+v_{j} $.

\begin{thm}\label{theo:hung}
($\VPV\proves $) Let $H =H_{\vec{u},\vec{v}}$
be the equality subgraph, and let $M$ be a maximum cardinality matching
of $H$.  Then the following three statements are equivalent
\begin{enumerate}[\em(1)]
\item $w(M) = \cost(\vec{u},\vec{v})$.
\item $M$ is a maximum-weight matching of $G$ and 
$(\vec{u},\vec{v})$ is a minimum-weight cover of $G$.
\item $M$ is a perfect matching of the equality subgraph $H$.
\end{enumerate}
\emph{(cf. \appref{hung} for the full proof of this theorem.)} 
\end{thm}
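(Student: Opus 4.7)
The plan is to prove the three equivalences as a cycle $(3) \Rightarrow (1) \Rightarrow (2) \Rightarrow (3)$ inside $\VPV$. The first two steps are short unfoldings of definitions together with a double application of \lemref{hung}; the third is the substantive step and mirrors the standard Hungarian ``dual improvement'': if $M$ is not a perfect matching of the equality subgraph, I will build a strictly cheaper cover, contradicting minimality.

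For $(3) \Rightarrow (1)$, since every edge $(i,j) \in M$ is by definition an equality edge we have $w_{i,j} = u_i + v_j$, and perfectness lets me write $M = \set{(i,\sigma(i)) : i \in [n]}$ for some permutation $\sigma$. Summing gives $w(M) = \sum_i u_i + \sum_j v_j = \cost(\vec u,\vec v)$, which is a $\Sigma_0^B(\LFP)$ identity provable by induction on $n$. For $(1) \Rightarrow (2)$, I apply \lemref{hung} twice: any matching $M'$ of $G$ satisfies $w(M') \le \cost(\vec u, \vec v) = w(M)$, so $M$ is maximum-weight; and any cover $(\vec u', \vec v')$ satisfies $\cost(\vec u', \vec v') \ge w(M) = \cost(\vec u, \vec v)$, so $(\vec u, \vec v)$ is minimum-cost.

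For $(2) \Rightarrow (3)$ I argue by contrapositive: assume $|M| < n$ and construct a strictly cheaper cover. Fix an $X$-vertex $x_{i_0}$ unmatched by $M$ and run a polytime BFS in $H$ that alternates non-$M$/$M$-edges from $x_{i_0}$ to compute the reached $X$-indices $Z \subseteq [n]$ and the reached $Y$-indices $T \subseteq [n]$. Because $M$ has maximum cardinality in $H$, no $M$-augmenting path in $H$ from $x_{i_0}$ exists, so every $y_j$ with $j \in T$ is $M$-matched to some $x_i$ with $i \in Z$; sending each $i \in Z \setminus \set{i_0}$ to its $M$-partner gives a bijection $Z \setminus \set{i_0} \leftrightarrow T$, so $|Z| = |T| + 1$. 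Let
\[ \delta = \min\bset{u_i + v_j - w_{i,j} : i \in Z,\ j \in [n] \setminus T}. \]
Then $\delta \ge 1$, since $\delta = 0$ would mean some $(i,j)$ with $i \in Z$, $j \notin T$ is an equality edge and would extend the alternating tree to $y_j$, forcing $j \in T$. Now define $(\vec u', \vec v')$ by $u'_i = u_i - \delta$ for $i \in Z$, $v'_j = v_j + \delta$ for $j \in T$, and leave the other entries unchanged. A four-case check on membership of $(i,j)$ in $Z \times T$ shows $(\vec u', \vec v')$ is still a cover, while $\cost(\vec u', \vec v') = \cost(\vec u, \vec v) - \delta(|Z| - |T|) < \cost(\vec u, \vec v)$, contradicting (2).

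The main obstacle will be this last step. Although the alternating BFS and Berge's observation are polytime and hence available as $\LFP$ functions, the structural invariants that make the argument go through --- namely $|Z| = |T| + 1$, the absence of augmenting paths from $x_{i_0}$, and the four-case cover check --- all have to be proved by $\Sigma_0^B(\LFP)$ induction inside $\VPV$. In particular, I will need a short $\VPV$ proof of Berge's fact that flipping an augmenting path in $H$ yields a matching of strictly greater cardinality, which then contradicts maximum cardinality of $M$.
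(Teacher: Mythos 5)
Your proof is correct and follows essentially the same route as the paper: the same cycle of implications, with (3)$\Rightarrow$(1) by summing the equality constraints over a perfect matching, (1)$\Rightarrow$(2) by a double application of \lemref{hung}, and (2)$\Rightarrow$(3) by the dual adjustment with $\delta$ over a set violating Hall's condition. The only difference is presentational: the paper obtains your sets $Z$ and $T$ by invoking \corref{hall} (itself proved via the alternating-path BFS and Berge's Theorem in the appendix), whereas you inline that BFS argument directly.
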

Below we give a simplified version of the Hungarian algorithm which
runs in polynomial time when the edge weights are small (i.e. presented
in unary notation).  The correctness of the algorithm easily
follows from \theoref{hung}.

\begin{algo}[The Hungarian algorithm] We start with an arbitrary weight
cover $(\vec{u},\vec{v})$ with small weights:
e.g. let \[u_{i} = \max\set{w_{i,j}\mid 1\le j \le n}\] and $v_{i}=0$ for all $i\in[n]$. If the equality 
subgraph $H_{\vec{u},\vec{v}}$ has a perfect matching $M$, we report $M$ as a maximum-weight 
matching of $G$. Otherwise, change the weight cover $(\vec{u},\vec{v})$ as follows. Since the maximum
matching $M$ is  not a perfect matching of $H$, the Hall's condition fails for $H$. 
Thus it is not hard 
(cf. \corref{hall} from \appref{hall})  
to construct in polytime a subset $S \subseteq X$ 
satisfying $|N(S)|<|S|$, where $N(S)$ denotes the  neighborhood of $S$. Hence we  can calculate 
the quantity
\[\delta = \min\bset{u_{i}+v_{j} - w_{i,j}\mid x_{i}\in S \;\wedge\; y_{j}\not\in N(S) },\]
and decrease $u_{i}$ by $\delta$ for all $x_{i}\in S$ and increase $v_{j}$ by $\delta$ for all $y_{j}\in 
N(S)$ without violating the weight cover property~\eref{cover}.
This strictly decreases the sum $ \sum_{i=1}^{n} (u_{i} + v_{i})$.
Thus this process can only repeat at most as many time as the initial cost
of the cover $(\vec{u},\vec{v})$. Assuming that all edge weights
are small (i.e. presented in unary), the algorithm terminates in
polynomial time. Finally we get an 
equality subgraph $H_{\vec{u},\vec{v}}$ containing a perfect matching $M$,
which by \theoref{hung} is  also a maximum-weight matching of $G$.  \\
\end{algo}

When formalizing the Isolating Lemma for bipartite matchings, we need a $\VPV$ function $\minweight$ that takes as 
inputs an edge relation  $E_{n\times n}$ of a bipartite graph $G$ and a nonnegative weight 
assignment  $\vec{w}$ to the edges in $E$, and  outputs a minimum-weight perfect matching if 
such a matching exists, or outputs $\emptyset$ to indicate that no perfect matching exists. 
Recall that the Hungarian algorithm   returns a maximum-weight matching,  and not  a \emph{minimum}-weight \emph{perfect} matching.  However we can use the Hungarian algorithm to compute $\minweight(n,E,\vec{w})$ as follows.  
\begin{algo}[Finding a minimum-weight perfect matching]\label{alg:mwp} $\;$
\begin{algorithmic}[1]
\STATE Let $c = n\cdot\max\bset{w_{i,j} \mid (i,j)\in E} + 1$
\STATE Construct the sequence $\vec{w}'$ as follows 
\[w'_{i,j} = \begin{cases}
c-w_{i,j} 	& \text{if } (i,j)\in E\\
0 	& \text{otherwise.}
\end{cases}\]
\STATE Run the Hungarian algorithm on the complete bipartite graph $K_{n,n}$ with weight 
assignment $\vec{w}'$ to get a maximum-weight matching $M$.
\IF{$M$ contains an edge that is not in $E$} 
\STATE return the empty matching $\emptyset$
\ELSE
\STATE return $M$
\ENDIF
\end{algorithmic}
\end{algo}

Note that since we assign zero weights  to the edges not present and very large weights to other edges, 
the Hungarian algorithm will always prefer the edges that are present in the original bipartite graph. 
More formally for any perfect matching $M$ and non-perfect matching $N$ we have
\begin{align*}
w'(M) 
&\ge nc - n\cdot \max \bset{w_{i,j} \mid (i,j)\in E}\\
&= (n-1)c + \left(c - n\cdot \max \bset{w_{i,j} \mid (i,j)\in E}\right)\\
&= (n-1)c + 1\\
&> (n-1)c\\
&\ge w'(N) 
\end{align*}
The last inequality follows from the fact that $w'_{i,j} \le c$ for all $(i,j)\in N$. Thus, if the Hungarian 
algorithm returns a matching $M$ with at least one edge not in $E$,  then the original graph 
cannot have a perfect matching. Also from the way the weight assignment $\vec{w}'$ was defined, 
every maximum-weight perfect matching of $K_{n,n}$ with weight assignment $\vec{w}'$ is a 
minimum-weight matching of the original bipartite graph.

It is straightforward to check that the above argument can be
formalized in \VPV, so \VPV\ proves the correctness of 
Algorithm \ref{alg:mwp} for computing the function $\minweight$.

\section{$\FRNC^{2}$ algorithm for finding a bipartite perfect matching}
Below we recall the elegant $\FRNC^{2}$ (or more precisely $\RDET$) algorithm due to Mulmuley, 
Vazirani and Vazirani \cite {MVV87} for finding a bipartite perfect matching. Although the original 
algorithm works for  general undirected graphs, we will only focus on bipartite graphs in this paper.

Let $G$ be a bipartite graph with two disjoint sets of vertices $U=\set{u_1,\ldots,u_n}$ and 
$V=\set{v_1,\ldots,v_n}$. We first consider the \emph{minimum-weight bipartite perfect matching problem}, 
where each edge $ (i,j)\in E$ is assigned
an integer weight $w_{i,j}\ge 0$, and we want
to a find a minimum-weight perfect matching of $G$. 
It  turns out there is a $\DET$ algorithm for this problem under two assumptions: the weights 
must be polynomial in $n$, and the  minimum-weight perfect matching must be \emph{unique}. 
We let $A(\vec{X})$ be an Edmonds matrix of the bipartite graph.  
Replace $X_{i,j}$ with $W_{i,j}=2^{w_{i,j}}$  (this is where we need the weights to be small). 
We then compute $\Det(A(\vec{W}))$ using Berkowitz's $\FNC^2$ algorithm. 
Assume that there exists exactly one (unknown) minimum-weight perfect matching $M$. We will 
show in \theoref{Bin2} that $w(M)$ is exactly the position  of the least significant $1$-bit, i.e., 
the number of  trailing zeros, in the binary expansion of $\Det(A(\vec{W}))$.
Once having $w(M)$, we can test if an edge $(i,j)\in E$ belongs to the unique minimum-weight perfect 
matching $M$ as follows. Let $w'$  be the position  of the least significant $1$-bit of 
$\Det(A[i\mid j](\vec{W}))$. We will show in \theoref{Bin3} that the edge $(i,j)$ is in the perfect matching if and only if
$w'$  is precisely $w(M) - w_{i,j}$. Thus, we can test all edges in parallel. 
Note that up to this point, everything can  be done in
$\DET\subseteq \FNC^2$ since the most 
expensive operation is the $\Det$ function, which is complete for $\DET$.

What we have so far is that, assuming that the minimum-weight perfect matching exists and is unique,  there is a $\DET$ algorithm for finding this minimum-weight perfect matching. 
But how do we guarantee that if a minimum-weight perfect matching exists, then it is unique?  
It turns out that we can
assign every edge $ (u_i,v_j)\in E$ a random weight  $w_{i,j}\in [2m]$,
where $m = |E|$, and use the Isolating Lemma \cite{MVV87} to ensure
that the graph has a \emph{unique} 
minimum-weight perfect matching with probability at least $1/2$.

The $\RDET\subseteq \FRNC^{2}$  algorithm for finding a perfect matching is now complete: 
assign random weights  to the edges, and run the $\DET$ algorithm for the unique minimum-weight 
perfect matching  problem. If a perfect matching exists, with probability at least $1/2$, this algorithm 
returns a perfect  matching.

\subsection{Isolating a perfect matching \label{sec:iso}}
We will recall the Isolating Lemma  \cite{MVV87}, the key ingredient of
 Mulmuley-Vazirani-Vazirani $\FRNC^2$ algorithm for finding a perfect matching. 
Let $X$ be a set with $m$ elements $\set{a_{1},\ldots,a_{m}}$ and let $\calf{F}$ be a 
family of subsets  of  $X$. We assign a weight $w_{i}$ to each element $a_{i}\in X$, and  
define the weight of a set $Y \in \calf{F}$ to be $w(Y)\df \sum_{a_{i}\in Y}w_{i}$. 
Let \emph{minimum-weight} be the minimum of the weights of all the sets in $\calf{F}$.
Note that several sets of $\calf{F}$ might achieve  minimum-weight. 
However, if minimum-weight is achieved by a unique $Y\in \calf{F}$, then we 
say that the weight assignment $\vec{w}=\seq{w_{i}}_{i=1}^{m}$ is \emph{isolating} for $\calf{F}$.  
(Every weight assignment is isolating if $|\cal{F}|\le 1$.) 

\begin{thm}[Isolating Lemma \cite{MVV87}] Let $\calf{F}$ be a
family of subsets of an $n$-element 
set $X=\set{a_{1},\ldots,a_{m}}$. Let $\vec{w}=\seq{w_{i}}_{i=1}^{m}$ be a
random weight assignment to the elements in $X$. Then
\[\Prob_{\vec{w}\in [k]^{m}} [\text{$\vec{w}$ is not isolating for $\calf{F}$}] \le \frac{m}{k}.\]
\end{thm}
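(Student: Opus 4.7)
The plan is to apply Definition~\ref{def:prob}: it suffices to exhibit a \VPV{} surjection
\[ G : [m]\times [k]^{m} \onto [k]\times \Phi, \]
where $\Phi$ is the set of weight assignments $\vec{w}\in [k]^{m}$ that are not isolating for $\calf{F}$. If $\Phi=\emptyset$ there is nothing to show, so I assume $\Phi\neq\emptyset$ and fix a default element $(1,\vec{w}_{0})\in [k]\times \Phi$. Because the lemma is applied in this paper to the family of perfect matchings of a bipartite graph, I assume polytime access to a minimum-weight member of $\calf{F}$ via the $\minweight$ function of Algorithm~\ref{alg:mwp}.

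I follow the classical Mulmuley--Vazirani--Vazirani argument, phrased so that it witnesses a polytime surjection. Call an index $i\in [m]$ \emph{$\vec{w}$-ambiguous} if there are two minimum-weight sets $Y_{1},Y_{2}\in\calf{F}$ with $a_{i}\in Y_{1}\triangle Y_{2}$. Letting $\alpha_{i}(\vec{w}_{-i})$ (resp.\ $\beta_{i}(\vec{w}_{-i})$) denote the minimum total weight of a set in $\calf{F}$ that contains (resp.\ omits) $a_{i}$, with the contribution of $w_{i}$ itself excluded, the key observation is that $a_{i}$ is $\vec{w}$-ambiguous precisely when
\[ w_{i}\;=\;\beta_{i}(\vec{w}_{-i})-\alpha_{i}(\vec{w}_{-i}), \]
and that $\vec{w}\in\Phi$ iff some index is $\vec{w}$-ambiguous. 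Both $\alpha_{i}$ and $\beta_{i}$ are polytime-computable via two calls to $\minweight$: one on the graph with the endpoints of edge $a_{i}$ deleted, and one on the graph with edge $a_{i}$ itself deleted.

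The surjection $G$ is defined as follows. On input $(i,\vec{u})\in [m]\times [k]^m$, compute $w^{\star}=\beta_{i}(\vec{u}_{-i})-\alpha_{i}(\vec{u}_{-i})$, let $\vec{w}$ be $\vec{u}$ with its $i$-th coordinate replaced by $w^{\star}$, and, provided $w^{\star}\in [k]$ and a further $\minweight$-based check confirms that $a_{i}$ is $\vec{w}$-ambiguous, output $(u_{i},\vec{w})$; otherwise output the default element $(1,\vec{w}_{0})$. For surjectivity, given $(j,\vec{w})\in [k]\times \Phi$, use $\Sigma_{0}^{B}(\LFP)\textit{-MIN}$ to pick the least $\vec{w}$-ambiguous index $i$, and let $\vec{u}$ be $\vec{w}$ with $w_{i}$ replaced by $j$. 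Since $\alpha_{i}$ and $\beta_{i}$ depend only on $\vec{w}_{-i}=\vec{u}_{-i}$, the reconstructed value $w^{\star}$ coincides with $w_{i}$, and $G(i,\vec{u})=(j,\vec{w})$.

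The main obstacle is formalizing in $\VPV$ that every $\vec{w}\in\Phi$ admits an ambiguous index, since this seemingly demands producing two distinct minimum-weight witnesses from $\calf{F}$. For the matching application I would detect ambiguity edge-by-edge: $a_{i}$ is ambiguous for $\vec{w}$ iff both minima---through $a_{i}$ and avoiding $a_{i}$---equal $w(\minweight(G,\vec{w}))$, a condition checkable in polytime. Non-isolation is then equivalent to the existence of such an edge, and this equivalence follows by short $\Sigma_{0}^{B}(\LFP)$ reasoning from the correctness of $\minweight$ already formalized in the previous section. The remaining verifications---that the cost-pinning equation characterizes ambiguity, and that $G$ is well-defined and surjective---are routine arithmetic manipulations inside $\VPV$.
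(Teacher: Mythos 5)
Your proposal matches the paper's treatment: like the paper, you prove only the special case where $\calf{F}$ is the family of perfect matchings of a bipartite graph (the paper explicitly declines to formalize the general lemma, for exactly the reason you give), and your surjection is the same one --- reconstruct the omitted weight $w_i$ as the difference $\beta_i-\alpha_i = w(M')-w(M_1)$ of two $\minweight$ computations, falling back to a dummy element of $\Phi$ when this fails, with surjectivity witnessed by an edge in the symmetric difference of two distinct minimum-weight matchings. The $\alpha_i/\beta_i$ bookkeeping, the redundant ambiguity check (the reconstructed assignment is automatically bad once both matchings exist), and the explicit use of $\Sigma_{0}^{B}(\LFP)$-\textit{MIN} are cosmetic differences from the paper's proof of Theorem~\ref{theo:iso}.
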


To formalize the Isolating Lemma in $\VPV$ it seems natural to present
the family $\calf{F}$ by a polytime algorithm.  This is difficult to
do in general (see Remark \ref{r:gen} below), so
we will formalize a special case which suffices to formalize the
$\FRNC^{2}$ algorithm for finding a bipartite perfect matching.
Thus we are given a bipartite graph $G$, and the
family $\calf{F}$ is the set of perfect matchings  of $G$.  We want
to show that if we assign random weights to the edges, then the 
probability that this weight assignment does not isolate a perfect
matching is small. Note that 
although the family $\calf{F}$ here might be exponentially large,
$\calf{F}$ is  polytime definable, 
since recognizing a perfect matching is easy.

\begin{thm}[Isolating a Perfect Matching]\label{theo:iso}
($\VPV \proves$) Let $\calf{F}$ be the family of perfect 
matchings  of a bipartite graph $G$ with edges $E=\set{e_1,\ldots,e_m}$.
Let $\vec{w}$  be a random weight assignment to the edges in $E$. Then
\[\Prob_{\vec{w} \in [k]^{m}} \bigl[\text{$\vec{w}$ is not isolating 
for $\calf{F}$}\bigr] \precsim m/k.\]
\end{thm}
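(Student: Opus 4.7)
The plan is to invoke \defref{prob} and construct a \VPV\ surjection
\[F\colon [m]\times [k]^{m} \onto [k]\times \Phi, \qquad \Phi=\bset{\vec{w}\in [k]^{m} \mid \vec{w}\text{ is not isolating for }\calf{F}}.\]
The underlying counting idea is that, for each edge $e_{i}$, once the remaining weights $\vec{w}_{-i}$ are fixed, the value of $w_{i}$ that makes $e_{i}$ \emph{ambiguous}---in the sense that $e_{i}$ lies in some but not all minimum-weight perfect matchings---is uniquely determined; and non-isolation of $\vec{w}$ forces at least one edge of the minimum-weight matching to be ambiguous, giving the informal bound $|\Phi|\le m\cdot k^{m-1}$.

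First I would define $F$ via two calls to the function $\minweight$ from \algref{mwp}. On input $(i,\vec{w})$, let $\vec{w}_{-i}$ denote $\vec{w}$ with the $i$th coordinate removed. Compute $Y_{0}$, the $\vec{w}_{-i}$-weight of a minimum-weight perfect matching of $G\setminus e_{i}$, and $Y_{1}^{*}$, the $\vec{w}_{-i}$-weight of a minimum-weight perfect matching of the bipartite graph $G/e_{i}$ obtained from $G$ by deleting the two endpoints of $e_{i}$ (its perfect matchings are in bijection with the perfect matchings of $G$ containing $e_{i}$, minus the weight $w_{i}$). If both matchings exist and $b:=Y_{0}-Y_{1}^{*}$ lies in $[k]$, let $\vec{w}^{*}$ be $\vec{w}$ with $w_{i}$ replaced by $b$ and output $(w_{i},\vec{w}^{*})$; otherwise output the default pair $(1,\vec{1})$, where $\vec{1}$ assigns weight $1$ to every edge. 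Everything is polytime, so $F\in\LFP$.

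Next I would verify surjectivity. Given $(a,\vec{w})\in [k]\times\Phi$, let $M_{1}=\minweight(G,\vec{w})$; non-isolation produces a second minimum-weight perfect matching $M_{2}\ne M_{1}$, so some edge of $M_{1}\triangle M_{2}$ is ambiguous. Applying $\Sigma_{0}^{B}(\LFP)\textit{-MIN}$ to the polytime condition that the minimum-weight perfect matching of $G\setminus e_{j}$ under $\vec{w}_{-j}$ has weight $w(M_{1})$, pick the least-indexed ambiguous edge $e_{i}\in M_{1}$. Then on input $(i,\vec{u})$ with $\vec{u}_{-i}=\vec{w}_{-i}$ and $u_{i}=a$, the algorithm computes $Y_{0}=w(M_{1})$ (by the choice of $e_{i}$) and $Y_{1}^{*}=w(M_{1})-w_{i}$ (since the overall minimum-weight matching $M_{1}$ contains $e_{i}$), so $b=w_{i}$ and $F(i,\vec{u})=(a,\vec{w})$. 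For the default: if $\Phi\ne\varnothing$ then $G$ has at least two perfect matchings, and under $\vec{1}$ every perfect matching has weight $n$, hence $\vec{1}\in\Phi$ and $(1,\vec{1})$ is a legitimate output.

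The hard part will be handling the ambiguous-edge step without committing to an explicit second matching. I would sidestep this by rephrasing ``$e_{i}$ is ambiguous'' as the $\LFP$-term equation asserting that the minimum-weight perfect matching of $G\setminus e_{i}$ under $\vec{w}_{-i}$ has the same weight as $M_{1}$; the existence of $M_{2}$ then enters only as the $\Sigma_{0}^{B}(\LFP)$ witness required to apply $\Sigma_{0}^{B}(\LFP)\textit{-MIN}$. Once everything is phrased through $\minweight$, the two identities $Y_{0}=w(M_{1})$ and $Y_{1}^{*}=w(M_{1})-w_{i}$ driving surjectivity reduce to the correctness of the Hungarian algorithm already proved for $\minweight$ in \VPV.
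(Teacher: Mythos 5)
Your construction is essentially the paper's own proof: the same two calls to $\minweight$ (on $G$ with $e_i$ deleted and on $G$ with the endpoints of $e_i$ deleted), recovery of the missing weight as the difference of the two minimum weights, and surjectivity via an edge in the difference of two distinct minimum-weight perfect matchings. The only point worth adding is the explicit check that the output $\vec{w}^{*}$ really lies in $\Phi$ (it does: under $\vec{w}^{*}$ the matchings $M'$ and $M_{1}\cup\set{e_i}$ are distinct minimum-weight perfect matchings, by the same weight identities you use for surjectivity).
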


For brevity, we will call a weight assignment $\vec{w}$ ``bad'' 
if $\vec{w}$ is not isolating for $\calf{F}$. Let 
\[\Phi:=\bset{\vec{w}\in [k]^m\mid \vec{w} \text{ is bad for }\calf{F}}.\]
Then to prove \theoref{iso}, it suffices to construct a $\VPV$ function
mapping $[m]\times [k]^ {m-1}$ onto $\Phi$.
Note that the upper bound $m/k$ is independent of the size $n$ of
the two vertex sets.
The set $\Phi$ is polytime definable since $\vec{w} \in \Phi$ 
iff 
\begin{align*}
\exists i,j \in [n]\left(\begin{array}{c}
\text{$E(i,j)$ and  $M(i,j)$ and  $\neg M'(i,j)$ and $M,M'$}\\
\text{encode two perfect matchings with the same weight}
\end{array}\right),
\end{align*}
where  $M$ denotes the output produced by applying the $\minweight$
function (Algorithm \ref{alg:mwp})  on $G$, and 
$M'$ denotes the output produced by  applying $\minweight$ on 
the graph obtained from $G$ by deleting  the edge $(i,j)$.

\begin{proof}[Proof of \theoref{iso}] By \defref{prob} we may assume
that $\Phi$ is nonempty, so there is an element $\delta \in \Phi$.
(We will use $\delta$ as a ``dummy'' element.)   It suffices for us to
construct explicitly a $\VPV$ function 
$\varphi$ mapping $[m]\times [k]^{m-1}$ onto $ \Phi$.
For each $i\in [k]$ we interpret the set $\set{i}\times [k]^{m-1}$
as the set of all possible weight assignments to the 
$m-1$ edges $E\setminus\set{e_i}$. Our function $\varphi$ will map each
set $\set{i}\times [k]^ {m-1}$ onto the set of those bad weight assignments
$\vec{w}$ such that the graph $G$ 
contains two \emph{distinct} minimum-weight perfect matchings $M$ and
$M'$ with $e_i\in M\setminus M'$.

The function $\varphi$ takes as input a sequence 
\[\seq{i,w_1, \ldots,w_{i-1},w_{i+1},\ldots, w_m}\] from $[m]\times [k]^{m-1}$
and does the following. Use the function $\minweight$ (defined by \algref{mwp}) to find a minimum-weight perfect 
matching $M'$ of  $G$ with the edge $e_i$ deleted. 
Use $\minweight$ to find a minimum-weight perfect matching 
$M_{1}$ of the subgraph $G\setminus\set{u_{j},v_{\ell}}$, where  $u_{j}$
and $v_{\ell}$ are the two endpoints of $e_i$. 
If both perfect matchings $M'$ and $M_{1}$ exist and satisfy
$w(M')-w(M_{1})\in [k]$, then $\varphi$ outputs  the sequence
\begin{equation}\label{e:badOut}
\seq{w_1, \ldots,w_{i-1},w(M')-w(M_{1}),w_{i+1},\ldots, w_m}. 
\end{equation}
Otherwise $\varphi$ outputs the dummy element $\delta$ of $\Phi$.

Note that if both $M'$ and $M_{1}$ exist, then (\ref{e:badOut}) is
a bad weight assignment, since $M'$ and $M = M_{1} \cup \{e_i\}$ are
distinct minimum-weight perfect matchings of $G$ under this assignment.

To show that $\varphi$ is surjective, consider an arbitrary bad weight
assignment $\vec{w}=\seq {w_i}_{i=1}^{m}\in \Phi$. Since $\vec{w}$ is bad,
there are two distinct minimum-weight perfect 
matchings $M$ and $M'$  and some edge $e_i\in M\setminus M'$. Thus
from how $ \varphi$ was defined, 
\[\seq{i,w_1,\ldots,w_{i-1},w_{i+1},\ldots,w_m}\in [m]\times [k]^{m-1}\] 
is an  element that gets mapped to the bad weight assignment $\vec{w}$.
\end{proof}

\begin{rem}\label{r:gen}
The above proof uses the fact that there is a polytime algorithm for
finding a minimum-weight perfect matching (when one exists) in
an edge-weighted bipartite graph.  This suggests limitations on
formalizing a more general version of Theorem \ref{theo:iso} in
$\VPV$.  For example, if $\calf{F}$ is the set of Hamiltonian
cycles in a complete graph, then finding a minimum weight member
of $\calf{F}$ is $\NP$ hard.  
\end{rem}

\subsection{Extracting the unique minimum-weight perfect matching}
Let $G$ be a bipartite graph and assume that $G$ has a perfect matching. Then in \secref{iso} 
we formalized a version of the Isolating Lemma, which with high probability gives 
us a weight assignment $\vec{w}$ for which $G$ has a unique minimum-weight perfect matching. 
This is the first step of the Mulmuley-Vazirani-Vazirani algorithm. Now we
proceed with the second step, where we need to output this minimum-weight perfect matching using 
a $\DET$  function.  

Let $B$ be the matrix we get by substituting  $W_{i,j} = 2^{w_{i,j}}$ for each nonzero  entry $(i,j)$ 
of the Edmonds matrix  $A$ of $G$.  We want to show that if $M$ is the unique minimum  weight perfect 
matching of $G$ with respect  to $\vec{w}$,  then the weight $w(M)$  is exactly the position 
of the least  significant $1$-bit in the binary expansion of $\Det(B)$. 
The usual proof of this fact is not hard, but it uses properties of the Lagrange expansion 
for the determinant,  which has 
exponentially many terms and hence cannot be formalized in $\VPV$.
Our proof avoids using the Lagrange expansion,  and utilizes
properties of the cofactor expansion instead.

\begin{lem} ($\VPV \proves$)\label{lem:Bin1}
There is a $\VPV$ function that takes as inputs an  $n\times n$ Edmonds' matrix $A$ and a weight sequence \[\vec{W}=\seq{W_{i,j} = 2^{w_{i,j}}\mid 1\le i,j\le n}.\] And if
$B=A(\vec{W})$ satisfies $\Det(B)\not=0$ and $p$ is the position of the least significant $1$-bit of $\Det(B)$,  then the $\VPV$ function outputs a perfect matching $M$ of weight 
at most $p$.
\end{lem}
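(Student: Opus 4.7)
The plan is to imitate the constructive extraction of a perfect matching from the proof of Edmonds' theorem (\theoref{Edmonds}), but at every step additionally track the $2$-adic valuation of the running determinant, so that the chosen edges accumulate weight bounded by the drop in this valuation. Write $\nu(x)$ for the position of the least significant $1$-bit of $|x|$, with $\nu(0):=+\infty$. The aim is to construct, for $i=n,\ldots,1$, indices $j_i$ and matrices $B_n=B,\,B_{n-1},\ldots,B_1$ with $B_{i-1}:=B_i[i\mid j_i]$, $B_i(i,j_i)\ne 0$, $\Det(B_i)\ne 0$, and, writing $\omega_i$ for the weight of the edge of $G$ whose $B$-entry sits at position $(i,j_i)$ of $B_i$ (so that $B_i(i,j_i)=2^{\omega_i}$),
\[
 \nu(\Det(B_i)) \;\ge\; \omega_i+\nu(\Det(B_{i-1})) \qquad (i=n,\ldots,2).
\]
Since $B_1$ is the $1\times 1$ matrix $(2^{\omega_1})$, we have $\nu(\Det(B_1))=\omega_1$; telescoping then yields $\sum_{i=1}^n \omega_i\le \nu(\Det(B))=p$. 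The matching $M$ is read off from the $j_i$'s via the same bookkeeping matrix $Q$ used in \theoref{Edmonds}, and its weight equals $\sum_i \omega_i$.

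The key elementary fact, provable in $\VPV$ by $\Sigma_{0}^{B}(\LFP)$-induction on the length of the sum, is that $\nu\bigl(\sum_k a_k\bigr)\ge \min_k \nu(a_k)$ for integers $a_k$. Applied to the cofactor expansion along row $i$ of $B_i$,
\[
 \Det(B_i)=\sum_{j=1}^{i} (-1)^{i+j}\, B_i(i,j)\cdot \Det(B_i[i\mid j]),
\]
and using that each nonzero $B_i(i,j)$ has the form $2^{w}$ for some edge-weight $w$, this gives $\nu(\Det(B_i))\ge \min_j\bigl(w+\nu(\Det(B_i[i\mid j]))\bigr)$, where the minimum ranges over those $j$ for which both $B_i(i,j)\ne 0$ and $\Det(B_i[i\mid j])\ne 0$. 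Since $\Det(B_i)\ne 0$, at least one term in the expansion is nonzero, so this set is non-empty; pick $j_i$ to be the least index achieving the minimum. By construction $\Det(B_i[i\mid j_i])\ne 0$, so we may set $B_{i-1}:=B_i[i\mid j_i]$ and continue, with the required invariant $\nu(\Det(B_i))\ge \omega_i+\nu(\Det(B_{i-1}))$ automatically satisfied.

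The whole procedure is a composition of polytime operations --- determinants via Berkowitz's algorithm, $2$-adic valuations of integers, and minimization over at most $n$ candidates --- so the extraction function lies in $\LFP$. The invariants ``$\Det(B_i)\ne 0$'', ``$B_i(i,j_i)\ne 0$'' and ``$\nu(\Det(B_i))\ge \omega_i+\nu(\Det(B_{i-1}))$'' are $\Sigma_{0}^{B}(\LFP)$-formulas, so $\VPV$ proves them for all $i=n,\ldots,1$ by $\Sigma_{0}^{B}(\LFP)$-induction; summing the $n-1$ resulting inequalities and combining with $\nu(\Det(B_1))=\omega_1$ gives $w(M)\le p$ inside $\VPV$. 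The main obstacle to anticipate is precisely the one flagged throughout the paper: the Lagrange expansion of $\Det$ would give the lemma almost immediately by inspecting the monomial of smallest $2$-adic valuation, but it is unavailable in $\VPV$, so every manipulation of the determinant must instead be routed through the cofactor expansion justified via the Cayley--Hamilton theorem from \cite{SC04}.
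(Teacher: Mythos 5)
Your proposal is correct and follows essentially the same route as the paper: construct $B_{i-1}=B_i[i\mid j_i]$ via the cofactor expansion, choose $j_i$ to minimize the $2$-adic valuation of the corresponding term, and use the fact that the valuation of a sum is at least the minimum valuation of its terms. The only cosmetic difference is that the paper carries the single invariant $\Nz(T_i)\le p$ for $T_i=\bigl(\prod_{\ell>i}B_\ell(\ell,j_\ell)\bigr)\Det(B_i)$, whereas you telescope the equivalent local inequalities $\nu(\Det(B_i))\ge\omega_i+\nu(\Det(B_{i-1}))$.
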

It is worth noting that the lemma holds regardless of whether or not the bipartite graph corresponding to $A$ and weight assignment $\vec{W}$ has a unique minimum-weight perfect matching.

The proof of \lemref{Bin1} is very similar  to that of \theoref{Edmonds}. Recall that in  \theoref{Edmonds},
given a matrix $B$ satisfying $\Det(B)\not=0$, we  want to extract a nonzero  diagonal of $B$. 
In this lemma, we are  given the position $p$ of the least  significant $1$-bit of $\Det(B)$, 
and we want to get a nonzero diagonal of  $B$ whose product has the 
least significant $1$-bit at position at most $p$. For this, we can use the  same method of extracting 
the nonzero diagonal from \theoref{Edmonds} with the following modification. When 
choosing a term of the Lagrange expansion on the recursive step, we will also need to make sure 
the chosen term produces a nonzero sub-diagonal of $B$ that will not contribute too much weight 
to  the diagonal  we are trying to extract. This ensures that the least significant $1$-bit of the weight of 
the chosen diagonal is  at   most $p$. 


For the rest of this section, we define $\Nz(Y)$ to be the position of the least significant 1-bit of the binary string $Y$.
Thus if $\Nz(Y) = q$ then $Y = \pm 2^qZ$ for some positive odd integer $Z$.

\begin{proof}[Proof of \lemref{Bin1}]
We construct a sequence of matrices 
\[B_{n},B_{n-1},\ldots, B_{1}\]
where $B_{n}=B$ and $B_{i-1} = B_{i}[i \mid j_{i}]$ for $i=n\ldots, 2$
where the index $j_{i}$ is chosen as follows. Define
\[ T_{i} := \left(\prod_{\ell = i+1}^{n} B_{\ell}(\ell,j_{\ell})\right)\Det(B_{i}).\]
Assume we are given $j_{n},\ldots,j_{i+1}$ such that $\Nz(T_{i})\le p$.
We want to choose $j_{i}$ such that 
$\Nz(T_{i-1})\le p$, where by definition
\[ T_{i-1}= \left(\prod_{\ell = i}^{n} B_{\ell}(\ell,j_{\ell})\right)\Det(B_{i-1}) = 
\left(\prod_{\ell = i}^{n} B_{\ell}(\ell,j_{\ell})\right)\Det( B_{i}[i \mid j_{i}]).\]
This can be done as follows. From the  cofactor expansion of $\Det(B_i)$,
we have
\begin{align*}
T_{i}
 = {\sum_{j=1}^{i} (-1)^{i+j}  \left(\prod_{\ell = i+1}^{n} B_{\ell}(\ell,j_{\ell})\right) B_{i}(i,j) \Det(B_{i}[ i \mid j]).}
\end{align*}
Since $\Nz(T_{i})\le p$,  at least one of 
the terms  in the sum must have its least significant $1$-bit at position at most $p$. Thus, 
we can choose $j_i$  such that 
\[\Nz\left( \left(\prod_{\ell = i+1}^{n} B_{\ell}(\ell,j_{\ell})\right)
B_{i}(i,j_{i}) \Det(B_{i}[ i \mid j_{i}])\right) = \Nz(T_{i-1})\]
is minimized, which guarantees that $\Nz(T_{i-1}) \le p$.

Since by assumption $\Nz(T_n) = \Nz(\Det(B_n)) = p$,  $\VPV$
proves by $\Sigma_{0}^{B}(\LFP)$ induction on $i=n,\ldots,1$ that 
\[\Nz(T_{i})\le p.\]
If we define $j_1 =1$, then when $i=1$ we have
\[
T_{1} =\left(\prod_{\ell = 2}^{n} B_{\ell}(\ell,j_{\ell})\right)\Det(B_{1}) = \prod_{\ell = 1}^{n} B_{\ell}(\ell,j_{\ell}).\]
Thus it follows that $\Nz(T_{1}) = \Nz\left(\prod_{\ell = 1}^{n} B_{\ell}(\ell,j_{\ell})\right)\le p$.

Similarly to the proof of \theoref{Edmonds}, we can extract a perfect matching with weight at most $p$ by 
letting  $Q$ be a matrix, where $Q(i,j) = j$ for all $i,j\in [n]$. 
Then we compute another sequence of matrices 
\[Q_{n},Q_{n-1},\ldots, Q_{1},\]
where $Q_{n}=Q$ and $Q_{i-1}=Q_{i}[i\mid j_{i}]$, i.e., we delete from $Q_{i}$ exactly the 
row and column we deleted from $B_{i}$. 

To prove that $M=\set{(\ell,Q_{\ell}(\ell,j_{\ell}))\mid 1\le \ell\le n}$ is a perfect matching, we note 
that whenever a pair $(i,k)$ is added to the matching $M$, we  delete the row $i$ and column $j_{i}
$, where $j_{i}$ is the index satisfying $Q_{i}(i,j_{i}) = k$. So we can never match any other vertex to $k$ again.

It remains to show that $w(M)\le p$. 
Since \[\prod_{\ell = 1}^{n} B_{\ell}(\ell,j_{\ell})=2^{w(M)},\] the binary expansion of 
$\prod_{\ell = 1}^{n} B_{\ell}(\ell,j_{\ell})$ has a unique one  at position $w(M)$ and zeros 
elsewhere. Thus it follows from how the matching $M$ was  constructed that $w(M)\le p$. 
\end{proof}

The next two theorems complete our description and justification of our
$\RDET$ algorithm for finding a perfect matching.  For these theorems
we are given  a bipartite  graph $G=(U\uplus V,E)$, where we have $U=\set{u_1,\ldots,u_n}$ and  $V=\set{v_1,\ldots,v_n}$,
and each edge $(i,j)\in E$ is assigned a weight $w_{i,j}$ such that $G$ has  a \emph{unique}
minimum-weight perfect matching (see Theorem \ref{theo:iso}).
 Let $\vec{W}_{n\times n}$ be a sequence
satisfying $W_{i,j} = 2^{w_{i,j}}$ for all $(i,j)\in E$. Let $A$ be  
the Edmonds matrix of $G$, and let $B=A(\vec{W})$. Let $M$ denote the unique  minimum 
weight perfect  matching of $G$. 

\begin{thm} ($\VPV \proves$) \label{theo:Bin2}
The weight $p=w(M)$ is exactly $\Nz(\Det(B))$.
\end{thm}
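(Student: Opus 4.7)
The plan is to avoid the Lagrange expansion entirely by a reverse induction that uses only the cofactor expansion together with \lemref{Bin1} and the uniqueness hypothesis on $M$. Let $\sigma$ be the permutation with $M = \set{(j,\sigma(j))\mid 1\le j\le n}$, and set $\pi_i \df \sum_{j=1}^{i} w_{j,\sigma(j)}$, so $\pi_0 = 0$ and $\pi_n = w(M)$. For each $0\le i\le n$, let $B_i$ denote the $(n-i)\times(n-i)$ submatrix of $B$ obtained by deleting the first $i$ rows together with the columns $\sigma(1),\ldots,\sigma(i)$; in particular $B_0 = B$ and $B_n$ is the empty matrix with $\Det(B_n) = 1$. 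I would prove by $\Sigma_0^B(\LFP)$ induction on $n-i$ the strengthened claim
\[
\Det(B_i)\ne 0 \quad\text{and}\quad \Nz(\Det(B_i)) = w(M) - \pi_i,
\]
from which the theorem follows by taking $i=0$; the base case $i=n$ is immediate.

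For the inductive step, assume the claim at $i+1$ and expand $\Det(B_i)$ by cofactors along its first row, i.e., the original row $i+1$ of $B$. Let $k^*$ be the index in $B_i$ of the column $\sigma(i+1)$; the $k^*$-th term equals $(-1)^{1+k^*}\cdot 2^{w_{i+1,\sigma(i+1)}}\cdot \Det(B_{i+1})$, which by the inductive hypothesis has the form $\pm 2^{w(M)-\pi_i}\cdot Z$ for an odd integer $Z$. For any other column $k$ of $B_i$, with corresponding original column $c_k$, the $k$-th term vanishes unless $(i+1,c_k)\in E$ and $\Det(B_i[1\mid k])\ne 0$, in which case I would apply \lemref{Bin1} to $B_i[1\mid k]$ to extract a perfect matching $M^*_k$ of the induced sub-bipartite-graph with $w(M^*_k)\le \Nz(\Det(B_i[1\mid k]))$. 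Combining $M^*_k$ with the fixed edges $(1,\sigma(1)),\ldots,(i,\sigma(i)),(i+1,c_k)$ yields a perfect matching $N_k$ of $G$ that disagrees with $M$ on row $i+1$, so \emph{uniqueness} of $M$ forces $w(N_k) > w(M)$. Since the weights are integers, this strict inequality upgrades to $w(M^*_k) \ge w(M) - \pi_i - w_{i+1,c_k} + 1$, so the $\Nz$ of the whole $k$-th term is at least $w(M) - \pi_i + 1$. Summing, $\Det(B_i) = \pm 2^{w(M)-\pi_i}\cdot (\mathrm{odd}) + 2^{w(M)-\pi_i+1}\cdot (\mathrm{integer})$, hence $\Det(B_i)\ne 0$ and $\Nz(\Det(B_i)) = w(M) - \pi_i$, closing the induction.

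The main obstacle is exactly the one emphasized throughout the paper: $\Det$ cannot be unfolded into its $n!$ monomials in $\VPV$, so the ``correct'' matching cannot be located among the Lagrange terms directly. The resolution is that cofactor expansion exposes only $n$ terms, and \lemref{Bin1} lets us translate any $\Nz$-bound on a nonzero cofactor back into the existence of a concrete perfect matching, which the uniqueness of $M$ then converts into the required strict weight inequality. The extra $+1$ coming from integrality is crucial: it raises the 2-adic valuation of every spurious term one notch above that of the $k^*$-term, so that the odd contribution survives the sum. The formalization in $\VPV$ is then routine, since $\Det$, $\Nz$, and the procedure of \lemref{Bin1} are all $\LFP$ functions and the induction statement is $\Sigma_0^B(\LFP)$ in $i$.
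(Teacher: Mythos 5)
Your proof is correct and takes essentially the same route as the paper's: the same family of submatrices of $B$ carved out along the unique matching $M$ (yours indexed by deleting the first $i$ rows rather than keeping them, a mirror-image relabeling), the same cofactor expansion along the row containing the matched entry, and the same use of \lemref{Bin1} together with uniqueness of $M$ to force the 2-adic valuation of every unmatched term strictly above that of the matched term. Your explicit inclusion of $\Det(B_i)\ne 0$ in the induction hypothesis and the explicit ``$+1$ from integrality'' step make precise two points the paper leaves implicit, but the argument is the same.
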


If in \lemref{Bin1} we tried to extract an appropriate nonzero diagonal of $B$  using the determinant
and minors of  $B$ as our guide, then in the proof 
of this theorem we do the reverse. From a minimum-weight perfect matching $M$ of $G$, we want 
to rebuild in polynomially many steps suitable minors of $B$ until we fully recover the 
determinant of  $B$.  
We can then prove by  $\Sigma_{0}^{B}(\LFP)$ induction that in every step 
of this process,  each  ``partial determinant'' of  $B$ has  the least significant 1-bit at position $p$. 
Note that the technique we used to prove this theorem does have some similarity to that of
\lemref{permatrix}, even though the proof of this theorem is more complicated.

\begin{proof} 
Let $Q$ be a matrix, where $Q(i,j) = j$ for all $i,j\in [n]$.
For $1\le i\le n$ let $B_i$ be the result of deleting rows
$i+1,\ldots,n$ and columns $M(i+1),\ldots,M(n)$ from $B$ and let $Q_i$
be $Q$ with the same rows and columns deleted.  We can construct
these matrices inductively in the form of two matrix sequences
\begin{align*}
&B_{n},B_{n-1},\ldots, B_{1}		&&Q_{n},Q_{n-1},\ldots, Q_{1}
\end{align*}
where 
\begin{itemize}
\item we let $B_{n}=B$  and $Q_{n}=Q$, and 
\item for $i=n,n-1,\ldots,2$, define $j_{i}$ to be the unique index satisfying
\[ M(i) = Q_{i}(i,j_{i}),\] 
and then let
$B_{i-1} = B_{i}[i\mid j_{i}]$  and $Q_{i-1} =Q_{i}[i\mid j_{i}]$.
\end{itemize}
Then (setting $j_1=1$)
\begin{equation}\label{e:Bident}
   B_i(i,j_i) = B(i,M(i)) = 2^{w_{i,M(i)}}, \ 1\le i \le n
\end{equation}

\begin{myclaimb}
 $\Nz(\Det(B_i)) = \sum_{\ell = 1}^i w_{\ell,M(\ell)}$ for all $i\in [n]$.
\end{myclaimb}

The theorem follows from this by setting $i=n$. We will prove the claim by induction on $i$.
The base case $i=1$ follows from (\ref{e:Bident}).

For the induction step, it suffices to show 
$$
    \Nz(\Det(B_{i+1})) = \Nz(\Det(B_i)) + w_{i+1,M(i+1)}
$$

From the cofactor expansion formula we have
\begin{align*}
\Det(B_{i+1}) 
= \sum_{j=1}^{i+1}  (-1)^{(i+1)+j} B_{i+1}(i+1,j)\Det(B_{i+1}[ i+1 \mid j])
\end{align*}
Since  $B_{i+1}(i+1,j_{i+1}) = 2^{w_{i+1,M(i+1)}}$
by (\ref{e:Bident}),
and $\Det(B_{i+1}[ i+1 \mid j_{i+1}]) = \Det(B_i)$,
it suffices to show that if
\[R:=B_{i+1}(i+1,j_{i+1})\Det(B_{i+1}[ i+1 \mid j_{i+1}])\] 
then
\[
 \Nz(R) < \Nz\bigl(B_{i+1}(i+1,j)\Det(B_{i+1}[ i+1 \mid j])\bigr)
\]
for all $j\not=j_{i+1}$. 

Suppose for a contradiction that there is some  $j'\not=j_{\ell}$ such that 
\[
\Nz\bigl(B_{i+1}(i+1,j')\Det(B_{i+1}[ i+1 \mid j'])\bigr) \le \Nz(R).
\]   
Then, we can extend  the  set of edges
\[\bset{(n,M(n)),\ldots, (i+2,M(i+2)),(i+1,j')}\] 
with $i$  edges extracted from $B_{i+1}[ i+1 \mid j']$ (using the method from \lemref{Bin1})  to get a 
perfect matching of $G$ with weight at most $p$, which contradicts that $M$ is the unique 
minimum-weight perfect matching of $G$.
\end{proof}

To extract the edges of  $M$ in $\DET$, we need to decide if 
an edge $(i,j)$ belongs to  the unique minimum-weight perfect matching $M$ without knowledge
of other edges in $M$.  The next theorem, whose proof follows directly from \lemref{Bin1} and \theoref{Bin2}, gives us that method. 

\begin{thm}($\VPV \proves$)\label{theo:Bin3}
For every edge  $(i,j)\in E$, we have $(i,j)\in M$ if and only if 
\[w (M)-w_{i,j}=\Nz\bigl(\Det (B[i\mid j])\bigr).\]
\end{thm}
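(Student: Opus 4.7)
The plan is to reduce the claim to the bipartite subgraph $G' = G\setminus\{u_i,v_j\}$ obtained by removing the two endpoints of the edge $(i,j)$. Under this reduction the Edmonds matrix of $G'$ is $A[i\mid j]$, and its weighted version (substituting $W_{k,\ell}=2^{w_{k,\ell}}$ on the surviving edges) is precisely $B[i\mid j]$. Perfect matchings of $G$ containing the edge $(i,j)$ correspond bijectively, via $N\mapsto N\setminus\{(i,j)\}$, to perfect matchings of $G'$, and this bijection shifts weights uniformly by $w_{i,j}$. Both directions of the theorem then fall out of applying \theoref{Bin2} and \lemref{Bin1} to $G'$ and $B[i\mid j]$.

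For the forward direction ($\Rightarrow$), suppose $(i,j)\in M$ and set $M' := M\setminus\{(i,j)\}$, which is a perfect matching of $G'$ of weight $w(M)-w_{i,j}$. I would first verify via the bijection above that $M'$ is the \emph{unique} minimum-weight perfect matching of $G'$: any perfect matching $N$ of $G'$ with $w(N)\le w(M')$ would yield $N\cup\{(i,j)\}$, a perfect matching of $G$ distinct from $M$ with weight at most $w(M)$, contradicting the uniqueness and minimality of $M$. Applying \theoref{Bin2} to $G'$ and its weighted Edmonds matrix $B[i\mid j]$ then gives $\Nz(\Det(B[i\mid j]))=w(M')=w(M)-w_{i,j}$.

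For the backward direction ($\Leftarrow$), assume $w(M)-w_{i,j}=\Nz(\Det(B[i\mid j]))$. This presupposes $\Det(B[i\mid j])\neq 0$, so \lemref{Bin1} applied to $B[i\mid j]$ with target $p=w(M)-w_{i,j}$ produces a perfect matching $M'$ of $G'$ with $w(M')\le w(M)-w_{i,j}$. Then $M'\cup\{(i,j)\}$ is a perfect matching of $G$ of weight at most $w(M)$, so by the minimality and uniqueness of $M$ it must coincide with $M$, and in particular $(i,j)\in M$.

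The only real obstacle is bookkeeping: one has to check in $\VPV$ that the matrix-deletion operation $B\mapsto B[i\mid j]$ really yields the weighted Edmonds matrix of $G\setminus\{u_i,v_j\}$, and that the edge-deletion correspondence between matchings is $\Sigma_0^B(\LFP)$-definable with the expected weight shift. Once this identification is made, the proof is a direct $\Sigma_0^B(\LFP)$-level argument on top of \theoref{Bin2} and \lemref{Bin1}, with no further use of exponentially large sums and hence no formalization difficulty in $\VPV$.
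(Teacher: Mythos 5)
Your proof is correct and follows essentially the same route as the paper's: the forward direction reduces to $G'=G\setminus\{u_i,v_j\}$ (noting $M\setminus\{(i,j)\}$ is its unique minimum-weight perfect matching) and invokes Theorem~\ref{theo:Bin2} on $B[i\mid j]$, while the backward direction extracts a matching from $B[i\mid j]$ via Lemma~\ref{lem:Bin1} and appeals to the uniqueness and minimality of $M$ — the paper merely phrases this second direction as a contrapositive. The only nit is that in your uniqueness check for $G'$ the matching $N\cup\{(i,j)\}$ is distinct from $M$ only when $N\neq M\setminus\{(i,j)\}$, but the intended case split (distinctness versus strictly smaller weight) is clear and harmless.
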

\begin{proof} ($\Rightarrow$): Assume  $(i,j)\in M$. Then the bipartite graph 
$G' = G\setminus\set{u_{i},v_{j}}$  must have a unique minimum-weight perfect matching 
of weight  $w(M)-w_{i,j}$. Thus from  \theoref{Bin2}, 
\[\Nz\bigl(\Det (B[i\mid j])\bigr)=w(M)-w_{i,j}.\]

 ($\Leftarrow$):  We prove the contrapositive.
 Assume $(i,j)\not\in M$. Suppose for a contradiction that 
\[w (M)-w_{i,j}=\Nz\bigl(\Det (B[i\mid j])\bigr).\] 
 Then by \lemref{Bin1} we can 
extract from the submatrix $B[i\mid j]$ a perfect matching $Q$ of the bipartite graph 
$G' = G\setminus\set{u_{i},v_{j}}$ with weight at most $w(M)-w_{i,j}$. But then  
$M' = Q\cup \set{(i,j)}$ is another perfect matching  of $G$ with $w(M')\le w(M)$, a contradiction.
\end{proof}

Theorems \ref{theo:iso}, \ref{theo:Bin2}, and \ref{theo:Bin3}
complete the description
and justification of our $\RDET$ algorithm for finding a perfect
matching in a bipartite graph.  Since these are theorems of $\VPV$,
it follows that $\VPV$ proves the correctness of the algorithm.

\subsection{Related bipartite matching problems}
The correctness of the Mulmuley-Vazirani-Vazirani algorithm can easily be used to establish the 
correctness of $\RDET$ algorithms for related matching problems, for example,  the maximum (cardinality) bipartite matching problem and the minimum-weight  bipartite perfect matching problem, where the weights
assigned to the edges are small.  We refer to \cite{MVV87} for more details on these reductions.

\section{Conclusion and Future Work}
We have only considered randomized matching algorithms for {\em bipartite}
graphs. For general 
undirected graphs, we need Tutte's matrix (cf. \cite{MVV87}), a generalization of 
Edmonds' matrix. Since  every Tutte matrix is  a skew symmetric matrix
where each variable appears
exactly twice, we cannot directly  apply our technique for Edmonds'
matrices, where each variable appears at most once. However,
by using the recursive definition of the Pfaffian 
instead of the cofactor expansion, we believe that it is also possible to
generalize our results to general undirected graphs. 
We also note that the Hungarian algorithm only works for weighted
bipartite graphs.   To find a maximum-weight matching of a weighted
undirected graph, we need to formalize Edmonds'
\emph{blossom algorithm} (cf. \cite{KV08}). Once we have the correctness of the  blossom algorithm, 
the proof of the Isolating Lemma for undirected graph perfect matchings
will be the same as that of \theoref{iso}. We leave the detailed
proofs for the general undirected graph case for  future work.

It is worth noticing that symbolic determinants of Edmonds' matrices
result in very special polynomials, whose
structures can be used to define the $\VPV$ surjections witnessing the
probability bound in the Schwartz-Zippel Lemma as demonstrated in this
paper. It remains an open problem whether we can prove the full version
of the Schwartz-Zippel Lemma using Je\v r\'abek's method within the theory $\VPV$.

We have shown that the correctness proofs for several randomized
algorithms can be formalized in the theory $\VPV$ for polynomial time
reasoning.  But some of the algorithms involved are in the subclass
$\DET$ of polynomial time, where $\DET$ is the closure of $\#L$
(and also of the integer determinant function $\Det$) under $\AC^0$
reductions.  As mentioned in Section~\ref{s:basicBA} the 
subtheory $\overline{\VsL}$ of $\VPV$ can define all functions in
$\DET$, but it is open whether $\overline{\VsL}$ can prove
properties of $\Det$ such as the expansion by minors.  However the
theory $\overline{\VsL} + \CH$ can prove such properties, where
$\CH$ is an axiom stating the Cayley-Hamilton Theorem.  
Thus in the statements of Lemma \ref{lem:permatrix} and
Theorem \ref{theo:Knn} we could have replaced $(\VPV\proves)$ by
$(\VsL + \CH\proves)$.  We could have done the same for Theorem \ref{theo:SZEd}
if we changed the argument $\vec{W}$ of the function $H$ to $M$,
where $M$ is a permutation matrix encoding a perfect matching for the
underlying bipartite graph. 
This modified statement of the theorem still proves the interesting
direction of the correctness of the bipartite perfect matching
algorithm in Section \ref{s:formBi},
since the function $H$ is used only to bound the error assuming that $G$
does have a perfect matching.

We leave open the question of whether any of the other correctness
proofs can be formalized in $\overline{\VsL} + \CH$.

We believe that Je\v r\'abek's framework deserves to be studied in greater depth since it  helps us to understand better  the connection between probabilistic reasoning and weak systems of bounded arithmetic.
We are working on  using  Je\v r\'abek's ideas to formalize 
constructive aspects of fundamental theorems in finite probability in
the spirit of the recent beautiful work by Moser and Tardos \cite{MT10}, Impagliazzo and Kabanets \cite{IK10}, etc.

\appendix
\section{Formalizing the Hungarian algorithm}
Before proceeding with the  Hungarian algorithm, we need to formalize the two most fundamental 
theorems for bipartite matching: Berge's Theorem and Hall's Theorem.

\subsection{Formalizing Berge's Theorem and the augmenting-path algorithm\label{app:Berge}} 
Let $G=(X \uplus Y, E)$ be  a bipartite graph, where $X=\set{x_{i}\mid 1\le i\le n}$ and $Y=\set{y_{i}
\mid 1\le i\le n}$. Formally to make sure that $X$ and $Y$ are disjoint, we can let $x_{i}:=i$ and $y_
{i}:=n+i$. We encode the edge relation $E$  of $G$ by a matrix $E_{n\times n}$, where $E(i,j)
=1$ iff $x_{i}$ is adjacent to $y_{j}$. Note that we often abuse notation and write $\set{u,v}\in E$ to 
denote that $u$ and $v$ are adjacent in $G$, which formally means either 
\begin{align*}
&u\in X\;\wedge\; v\in Y \wedge E(u, v-n), \text{ or} \\
&v\in X\;\wedge\; u\in Y \wedge E(v, u-n). 
\end{align*}
This complication is due to the usual convention of using an $n\times n$ matrix to encode the edge 
relation of a bipartite graph with $2n$ vertices.

An $n\times n$ matrix $M$ encodes a matching of $G$ iff  $M$ is a permutation matrix satisfying 
\[\forall i,j\in [n], M(i,j)\rightarrow E(i,j).\]

We represent a path by a sequence of vertices $\seq{v_1,\ldots,v_k}$ with  $\set{v_i,v_{i+1}}\in E$ 
for all $i \in [k]$.  

Given a matching $M$, a vertex $v$ is $M$-\emph{saturated} if $v$ is incident with an edge in $M$. 
We will say $v$ is $M$-\emph{unsaturated} if it is not $M$-saturated. A path $P=\seq
{v_1,\ldots,v_k}$ is an $M$-\emph{alternating path} if  $P$ alternates between edges in $M$ and 
edges in $E\setminus M$. More formally, $P$ is an $M$-alternating path if either  of the following 
two conditions holds:
\begin{itemize}
 \item For every $i\in \set{1,\ldots, k-1}$, $\set{v_i,v_{i+1}}\in E\setminus M$ if $i$ is odd, and $\set
{v_i,v_{i+1}}\in M$ if $i$ is even.
 \item  For every $i\in \set{1,\ldots, k-1}$,  $\set{v_i,v_{i+1}}\in E\setminus M$ if $i$ is even, and $\set
{v_i,v_{i+1}}\in M$ if $i$ is odd.
\end{itemize}

An $M$-alternating path $\seq{v_1,\ldots,v_k}$ is an $M$-\emph{augmenting path}  if   the vertices $v_{1}$ and $v_{k}$ are $M$-unsaturated.

\begin{thm}[Berge's Theorem]\label{theo:aug}
($\VPV\proves $) Let $G=(X \uplus Y, E)$ be  a bipartite graph. A 
matching $M$ is maximum iff there is no $M$-augmenting path in $G$.
\end{thm}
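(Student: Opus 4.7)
The plan is to prove both directions by exhibiting explicit polytime constructions whose correctness is verifiable in $\VPV$ using $\Sigma_0^B(\LFP)$-induction.

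For the ($\Rightarrow$) direction I argue by contrapositive: given an $M$-augmenting path $P = \seq{v_1,\ldots,v_k}$, let $M'$ be obtained from $M$ by removing the $M$-edges of $P$ and inserting the non-$M$-edges of $P$ in their place. A $\Sigma_0^B(\LFP)$-induction along the indices of $P$ shows that $M'$ is a matching: each internal vertex $v_i$ has its unique $M$-edge on $P$ swapped for the adjacent non-$M$-edge, while the endpoints $v_1$ and $v_k$ each gain exactly one edge because they are $M$-unsaturated. A second easy induction counts edges along $P$ and gives $|M'| = |M| + 1$, so $M$ is not maximum.

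For the ($\Leftarrow$) direction, again by contrapositive, I assume $M'$ is a matching with $|M'| > |M|$ and construct an $M$-augmenting path from the subgraph $H$ of $G$ whose edge set is the symmetric difference $M \oplus M'$. Since each vertex is incident to at most one edge of $M$ and at most one of $M'$, every vertex of $H$ has degree at most $2$. A polytime graph-traversal procedure that walks outward from each degree-$1$ vertex of $H$ enumerates the connected components of $H$ as a list $C_1,\ldots,C_r$ of $M,M'$-alternating paths and alternating even cycles, where a vertex is an endpoint of a path iff it is saturated by exactly one of $M$, $M'$.

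I classify each path component by its two endpoints: type $\alpha$ if both endpoints are $M$-saturated and $M'$-unsaturated, type $\beta$ if both are $M'$-saturated and $M$-unsaturated, and mixed otherwise. A type-$\alpha$ path has one more $M$-edge than $M'$-edge, a type-$\beta$ path has one more $M'$-edge than $M$-edge, and mixed paths as well as all cycles contribute equally to $M$ and to $M'$. Summing the per-component contributions over $C_1,\ldots,C_r$ by $\Sigma_0^B(\LFP)$-induction on the component index yields
\[
|M'| - |M| = \#\{\text{type-}\beta \text{ paths}\} - \#\{\text{type-}\alpha \text{ paths}\},
\]
so the assumption $|M'| > |M|$ forces at least one type-$\beta$ path to exist, and any such path is by construction an $M$-augmenting path of $G$.

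The main obstacle is formalizing the component-counting identity inside $\VPV$, since raw cardinalities of matching-related sets are not directly bounded-definable. However, because the polytime traversal produces an explicit list of at most $n$ components together with their $M$- and $M'$-edge counts, the identity above reduces to an equality between two polytime sums provable by $\Sigma_0^B(\LFP)$-induction on the component index; the rest of the argument is routine verification of the standard graph-traversal construction, well within the reasoning power available in the earlier sections of the paper.
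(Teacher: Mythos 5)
Your proposal is correct and follows essentially the same route as the paper: the forward direction augments along $P$ via the symmetric difference, and the backward direction decomposes $M\oplus M'$ into degree-at-most-2 components and counts $M$- versus $M'$-edges per component, exploiting the fact that these sets are small enough to count directly in $\VPV$. Your endpoint-type classification yields an exact identity where the paper derives a contradiction from the per-component inequality $|C_i\cap M|\ge |C_i\cap M'|$, but this is only a cosmetic difference.
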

\begin{proof}($\Rightarrow$): Assume that all matchings $N$ of $E$ satisfy $|N|\le |M|$. Suppose for 
a contradiction that there is an $M$-augmenting path $P$. Let $M \oplus P$ denote the symmetric difference 
of two sets of edges $M$ and $P$. Then $M'= M \oplus P$ is a matching  
greater than $M$, a contradiction. 

($\Leftarrow$): We will prove the contrapositive. Assume there is another matching $M'$ satisfying 
$|M'|> |M|$. We want to construct an $M$-augmenting path in $G$.

Consider $Q = M' \oplus M$. Since $|M'|> |M|$, it follows that $|M' \setminus M| > |M\setminus M'|$, and thus 
\begin{align}
|Q\cap M'|>|Q\cap M| 	\label{eq:aug.1}
\end{align}
Note that we can compute cardinalities of the sets directly here since all the sets we are 
considering here are small. Now let $H$ be the graph whose edge relation is $Q$ and whose 
vertices are simply  the vertices of $G$. We then observe the following properties of $H$:
\begin{itemize}
 \item Since $Q$ is constructed from two matchings $M$ and $M'$, every vertex of $H$ can only be 
incident with at most two edges: one from $M$ and another from $M'$. So every vertex of $H$ has 
degree at most $2$.
 \item Any path of $H$ must alternate between the edges of $M$ and $M'$.
\end{itemize}
We will provide a polytime algorithm to extract  from the graph $H$ an augmenting path with 
respect to $M$, which gives us the contradiction.
\begin{algorithmic}[1]
\STATE Initialize $K= H$ and  $i= 1$
\WHILE{$K\not= \emptyset$}
	\STATE Pick the least vertex $v\in K$
	\STATE Compute the connected component $C_{i}$ containing $v$
	\IF{$C_{i}$ is an $M$-augmenting path}
	\STATE return $C_{i}$ and halt.
	\ENDIF
	\STATE Update $K = K \setminus C_{i}$ and $i= i+1$.
\ENDWHILE
\end{algorithmic}

Note that since $H$ has $n$ vertices, the while loop can only iterate at most $n$ times. It only 
remains to show the following.

\begin{myclaimb}
The algorithm returns an $M$-augmenting path assuming
$|M'| > |M|$.
\end{myclaimb}

Suppose for a contradiction that the algorithm would never produce any $M$-augmenting path. 
Since $H$ has degree at most two, in every iteration of the while loop, we know that 
the connected component $C_{i}$ is 
\begin{itemize}
\item either a cycle, which means $|C_{i}\cap M| = |C_{i}\cap M'|$, or 
\item a path but not an $M$-augmenting path, which implies that  $|C_{i}\cap M| \ge  |C_{i}\cap M'|$.
\end{itemize}
Since $Q=M' \oplus M = \bigcup_{i} C_{i}$ and all $C_{i}$ are disjoint, we have 
\[|Q\cap M| = |\bigcup_{i} (C_{i} \cap M)| \ge |\bigcup_{i} (C_{i} \cap M')| = |Q\cap M'|.\]
But this contradicts \eref{aug.1}.
\end{proof}

\begin{algo}[The augmenting-path algorithm] As a corollary of Berge's Theorem, we have the 
following simple algorithm for finding a maximum matching of a bipartite graph $G$. We start from 
any matching $M$ of $G$, say empty matching. Repeatedly locate an $M$-augmenting path $P$ 
and augment $M$ along $P$ and replace $M$ by the resulting matching. Stop when there is no $M
$-augmenting path. Then we know that $M$ is maximum. 
Thus, it remains to show how to search for an $M$-augmenting path given a matching $M$ of $G$.
\end{algo}

\begin{algo}[The augmenting-path search algorithm]\label{alg:MWP} 
First, from $G$ and $M$ we construct a directed graph $H$, where the vertices $V_{H}$ of $H$ are 
exactly the vertices $X\uplus Y$ of $G$, and the edge relation $E_{H}$ of $H$ is a $2n\times 2n$ 
matrix  defined as follows:
\begin{align*}
E_{H} := \bset{(x,y)\in X\times Y \mid \set{x,y}\in E\setminus M} 
		\cup \bset{(y,x)\in Y\times X\mid \set{y,x}\in M}.
\end{align*} 
The key observation is that $t$ is reachable from $s$ by an $M$-alternating path in the bipartite 
graph $G$ iff $t$ is reachable from $s$ in the directed graph $H$.

After constructing the graph $H$, we can  search for an $M$-augmenting path using the \emph
{breadth first search} algorithm as follows. Let $s$ be an $M$-unsaturated  vertex in $X$. We 
construct two $2n\times 2n$ matrices $S$ and $T$ as follows.
\begin{algorithmic}[1]
\STATE The row $\row(1,S)$ of $S$ encodes the set $\set{s}$, the starting point of our search.
\STATE  From $\row(i,S)$, the set $\row(i+1,S)$  is defined as follows:
$j\in \row(i+1,S) \leftrightarrow$ there exists some  $k\in [n]$, \[\row(i,S)(k) \;\wedge\; E_{H}(k,j) \;
\wedge \forall \ell\in [i], \neg \row(\ell,S)(k).\]
After finish constructing $\row(i+1,S)$, we can update $T$ by setting $T(j,k)$ to $1$ for every $k\in 
\row(i,S)$ and $j\in \row(i+1,S)$ satisfying $E_{H}(k,j)$. \\[0.5cm]
\end{algorithmic}

\noindent Intuitively, $\row(i,S)$ encodes the set of vertices that are of distance $i-1$ from $s$, and $T$ is the 
auxiliary matrix that can be used to recover a path from $s$ to any vertex $j\in \row(i,S)$ for all $i\in 
[2n]$.

\end{algo}
\begin{rem} Let $R=\bigcup_{i=2}^{n}\row(i,S)$, then $R$ the set of vertices reachable from $s$ 
by an $M$-alternating path. This follows from the fact that our construction mimics the construction 
of the formula $\delta_{\textsf{CONN}}$, which was used to define the theory $\VNL$ for $\NL$ in 
\cite[Chapter 9]{CN10}. The matrix $T$ is constructed to help us trace a path for every vertex $v\in 
R$ to $s$.
\end{rem}

By induction on $i$, we can prove that $\row(i,S)\subseteq X$ for every odd $i\in [2n]$, and $\row(i,S)
\subseteq Y$ for every even $i\in [2n]$. Thus, after having $S$ and $T$, we choose the largest even 
$i^{*}$ such that $\row(i^{*},S)\not= \emptyset$, and then search the set $\row(i^{*},S)\cap Y$ for an 
$M$-unsaturated vertex $t$. If such vertex $t$ exists, then we use $T$ to trace back a path to $s$. 
This path will be our $M$-augmenting path. If no such vertex $t$ exists, we report that there is no 
$M$-augmenting path.

\subsection{Formalizing Hall's Theorem \label{app:hall}}
\begin{thm}[Hall's Theorem]($\VPV\proves $)  Let $G=(X \uplus Y, E)$ be  a bipartite graph. 
Then $G$ has a perfect matching iff for every subset $S\subseteq X$, $|S|\le |N(S)|$, where $N(S)$ 
denotes the neighborhood of $S$ in $G$.
\end{thm}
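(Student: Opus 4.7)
The plan is to handle the two directions separately, exploiting Berge's Theorem (\theoref{aug}) and the augmenting-path search algorithm (\algref{MWP}) from \appref{Berge}.

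The ($\Rightarrow$) direction is routine. Suppose $M$ is a perfect matching of $G$ and let $S \subseteq X$. Since $M$ is a permutation matrix, the map $x \mapsto M(x)$ (where $M(x)$ denotes the unique $y$ with $M(x,y) = 1$) is an injection from $S$ into $Y$, and its image lies in $N(S)$ because $M(x,y) = 1$ implies $E(x,y)$. A $\Sigma_0^B(\LFP)$-induction on $|S|$ that counts elements via an enumeration of $X$ then yields $|S| \le |N(S)|$.

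For ($\Leftarrow$) I will argue by contrapositive: assuming $G$ has no perfect matching, I will construct $S^* \subseteq X$ with $|N(S^*)| < |S^*|$. First, apply the augmenting-path algorithm of \appref{Berge} to produce a maximum matching $M$ of $G$; by hypothesis $|M| < n$, so a bounded counting argument supplies an $M$-unsaturated vertex $x_0 \in X$. Next, run \algref{MWP} with source $s = x_0$ to obtain the BFS layer matrix (call it $\mathcal{S}$) and the back-pointer matrix $\mathcal{T}$ inside the directed graph $H$ associated to $(G, M)$. Define
\[
S^* \;:=\; \{x_0\} \cup \bigcup_{\text{odd } i > 1} \row(i, \mathcal{S}), \qquad T^* \;:=\; \bigcup_{\text{even } i} \row(i, \mathcal{S}).
\]
The key claims, all provable in $\VPV$, are: (a) $T^* \subseteq N(S^*)$, directly from the definition of the $X \to Y$ edges of $H$ and the BFS update rule; (b) $N(S^*) \subseteq T^*$, since any edge from $S^*$ into $Y$ is an edge of $H$ and so its endpoint must appear in the immediately following BFS layer (this uses the fact that once a $Y$-vertex is reached, BFS records it); (c) every $y \in T^*$ is $M$-saturated and its $M$-partner lies in $S^*$, because the $Y \to X$ edges of $H$ are precisely the matching edges, so the successor layer of $y$ contains exactly $M(y)$. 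Items (a)--(c) give a $\VPV$-definable bijection between $T^*$ and $S^* \setminus \{x_0\}$, whence a bounded counting argument yields $|N(S^*)| = |T^*| = |S^*| - 1 < |S^*|$.

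The main obstacle is verifying clause (c), which rests on proving that every $y \in T^*$ is $M$-saturated. If some $y \in T^*$ were $M$-unsaturated, then tracing back through $\mathcal{T}$ from $y$ to $x_0$ would yield a vertex sequence $P$, and I must prove in $\VPV$ that $P$ satisfies the formal definition of an $M$-augmenting path from \appref{Berge}. This requires extracting $P$ as an explicit $\LFP$ function and carrying out a $\Sigma_0^B(\LFP)$-induction on the BFS layer to verify that consecutive edges alternate between $M$ and $E \setminus M$ and that both endpoints are $M$-unsaturated. Once $P$ is constructed, \theoref{aug} (Berge's Theorem) contradicts the maximality of $M$. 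As a by-product, since $S^*$ is produced by a polytime function from $G$ whenever $G$ has no perfect matching, this argument also yields \corref{hall}, which the Hungarian algorithm invokes.
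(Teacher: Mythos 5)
Your proposal is correct and follows essentially the same route as the paper: the forward direction via the injection induced by the perfect matching, and the contrapositive of the converse by taking a maximum matching from the augmenting-path algorithm, running the BFS search from an unsaturated vertex $x_0$, and showing that $S^* = \{x_0\}\cup(R\cap X)$ has neighborhood exactly $R\cap Y$, which is matched bijectively into $S^*\setminus\{x_0\}$. The only cosmetic difference is that in your claim (b) the matching edges out of $S^*$ are oriented $Y\to X$ in $H$ rather than being $H$-edges from $X$, but their $Y$-endpoints lie in $T^*$ by the layer structure anyway, so the argument goes through exactly as in the paper's Claims 1 and 2.
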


The condition $\forall S\subseteq X, |S|\le |N(S)|$, which is necessary and sufficient for a bipartite 
graph to have a perfect matching, is called Hall's condition. We encode a set $S\subseteq X$ in the 
theorem as a binary string of length $n$, where $S(i) = 1$ iff $x_{i}\in S$. Similarly, we encode the 
neighborhood $N(S)$ as a binary string of length $n$, and we define 
\[N(S):= \bigcup\bset{\row(i,E)\mid x_{i}\in S},\] 
where the union can be computed by taking the disjunction of all binary vectors in the set
\[\bset{\row(i,E)\mid x_{i}\in S}\] 
componentwise. Note that we can compute the cardinalities of $S$ and $N(S)$ 
directly since both of these sets are subsets of small sets $X$ and $Y$.

\begin{proof}($\Rightarrow$): Assume $M$ is a perfect matching of $G$. 
Given a subset $S\subseteq X$, the vertices in $S$ are matched to 
the vertices in some subset $T\subseteq Y$ by the perfect matching $M$, where $|S|=|T|$. Since 
$T\subseteq N(S)$, we have $|N(S)|\ge |T|  = |S|$. 

($\Leftarrow$): We will prove the contrapositive. Assume $G$ does not have a perfect matching, we 
want to construct a subset $S\subseteq X$ such that $|N(S)|< |S|$. 

Let $M$ be a maximum but not perfect matching constructed by the augmenting-path algorithm. Since $M$ is not a perfect 
matching, there is some $M$-unsaturated vertex $s\in X$. Let $S$ and $T$ be the result of running 
the  ``augmenting-path search'' algorithm from $s$, then $R := \bigcup_{i=2}^{n}\row(i,S)$ is the 
set of all vertices reachable from $s$ by an $M$-alternating path. Since there is no $M$-augmenting path, all the vertices in  $R$ are $M$-saturated. We want to show the following two  
claims.

\begin{myclaim}{1}
The vertices in $R\cap X$ are all matched to the vertices in $R\cap Y$ by $M$, 
and 
\[|R\cap X| = |R\cap Y|.\]
\end{myclaim}

Suppose for a contradiction that some vertex $v\in R$  is not matched to any vertex $u\in R$ by 
$M$. Since we already know that all vertices in $R$ are $M$-saturated, $v$ is matched by some 
vertex $w\not\in R$ by $M$. But this is a contradiction since $w$ must be reachable from $s$ by an 
alternating path, and so  the augmenting-path search algorithm must already have added $w$ to 
$R$. Thus, the vertices in $R\cap X$ are all matched to the vertices in $R\cap Y$ by $M$, which 
implies that $|R\cap X| = |R\cap Y|$.

\begin{myclaim}{2}
$N(R\cap X) = R\cap Y$. 
\end{myclaim}

Since $R\cap X$ are matched to $R\cap Y$, we know $N(R\cap X) \supseteq R\cap Y$. Suppose 
for a contradiction that $N(R\cap X) \supset R\cap Y$. Let $v\in N(R\cap X) \setminus R\cap Y$, 
and $u\in R\cap X$ be the vertex adjacent to $v$. Since $u$ is reachable from $s$ by an 
$M$-alternating path $P$, we can extend $P$ to get an $M$-alternating path from $s$ to $v$, which 
contradicts that $v$ is not added to $R$. 

We note that $N(\set{s}\cup (R\cap X)) = R\cap Y$. Then $S = \set{s}\cup (R\cap X)$  is the 
desired set since
\[|N(S)| = |R\cap Y| =  |R\cap X| < |S|.\]
\end{proof}

From the proof of Hall's Theorem, we have the following corollary saying that if a bipartite graph 
does not have a perfect matching, then we can find in polytime a subset of vertices violating Hall's 
condition.

\begin{corollary}($\VPV\proves $) \label{cor:hall}
There is a $\VPV$ function that, on input a bipartite graph $G$ that does not have a perfect matching, 
outputs  a subset $S\subseteq X$ such that $|S|> |N(S)|$.
\end{corollary}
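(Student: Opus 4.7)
The plan is to observe that the proof of Hall's Theorem just completed is already constructive and polytime, and to package its steps as an $\LFP$ function. Concretely, on input $G=(X\uplus Y,E)$ with no perfect matching, I would proceed as follows. First, run the augmenting-path algorithm (Algorithm~\ref{alg:MWP} iterated) to produce a maximum matching $M$ of $G$; this is polytime since each iteration performs a BFS over a graph of size $\calf{O}(n)$ and the number of iterations is bounded by $n$. Since $G$ has no perfect matching, $M$ is not perfect, so by $\Sigma_{0}^{B}(\LFP)\textit{-MIN}$ we can pick the least $M$-unsaturated vertex $s\in X$.

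Next, I would run the augmenting-path search algorithm (the BFS in the auxiliary directed graph $H$) starting from $s$ to produce the matrices $S$ and $T$, and then compute
\[ R \;\df\; \bigcup_{i=2}^{2n} \row(i,S), \]
i.e.\ the set of vertices reachable from $s$ by an $M$-alternating path. Output $S \df \set{s}\cup (R\cap X)$. Each step is clearly $\LFP$: the row vectors $\row(i,S)$ are defined by $\Sigma_{0}^{B}(\LFP)$-comprehension, the union is computed bitwise, and the final set $S$ is obtained by one more $\Sigma_{0}^{B}(\LFP)$-comprehension. All bounds are polynomial in $n$.

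To verify correctness inside $\VPV$, I would reuse the two claims proved in Hall's Theorem, which were both established by $\Sigma_{0}^{B}(\LFP)$ reasoning: (i) every vertex of $R$ is $M$-saturated and the $M$-edges restricted to $R$ give a bijection between $R\cap X$ and $R\cap Y$, so $|R\cap X|=|R\cap Y|$; and (ii) $N(R\cap X)=R\cap Y$. Once these are available, $N(S)=N(\set{s}\cup (R\cap X))=R\cap Y$, because every neighbor of $s$ lies in $R\cap Y$ by construction of the BFS. Hence
\[ |N(S)| \;=\; |R\cap Y| \;=\; |R\cap X| \;<\; |\set{s}\cup (R\cap X)| \;=\; |S|, \]
where the strict inequality uses that $s\notin R\cap X$ (it is unsaturated while every vertex of $R$ is saturated).

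The main obstacle is bookkeeping rather than mathematics: one must check that the claims from the proof of Hall's Theorem, and in particular the statement that $R$ really is the set of vertices reachable from $s$ by $M$-alternating paths, are provable in $\VPV$ using only $\Sigma_{0}^{B}(\LFP)$ induction over the BFS level $i$. The correctness of the maximum-matching subroutine (Berge's Theorem, \theoref{aug}) is already formalized in $\VPV$, and cardinalities here are all over sets of size at most $2n$, so they can be computed directly without appeal to any weak pigeonhole principle. Nothing beyond polytime concepts is needed.
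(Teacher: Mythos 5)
Your proposal is correct and follows essentially the same route as the paper: the corollary is extracted directly from the constructive ($\Leftarrow$) direction of the proof of Hall's Theorem, taking $S=\set{s}\cup(R\cap X)$ where $M$ is a maximum matching from the augmenting-path algorithm, $s$ is an $M$-unsaturated vertex, and $R$ is the set of vertices reachable from $s$ by $M$-alternating paths, with the two claims $|R\cap X|=|R\cap Y|$ and $N(\set{s}\cup(R\cap X))=R\cap Y$ giving $|N(S)|<|S|$. No substantive differences.
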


\subsection{Proof of \theoref{hung} \label{app:hung}}
Let $H =H_{\vec{u},\vec{v}}$ be the equality subgraph for the weight cover  $(\vec{u},\vec{v})$, and 
let $M$ be a maximum cardinality matching of $H$. Recall \theoref{hung} wants us  to show that  $\VPV$ 
proves equivalence of the following three statements:
\begin{enumerate}[(1)]
\item $w(M) = \cost(\vec{u},\vec{v}) $
\item $M$ is a maximum-weight matching and the cover $(\vec{u},\vec{v})$ is a minimum-weight 
cover of $G$
\item $M$ is a perfect matching of $H$
\end{enumerate}

\begin{proof}[Proof of \theoref{hung}](1)$\Rightarrow$(2):
Assume that $\cost(\vec{u},\vec{v}) = w(M)$. By \lemref{hung}, no matching has weight greater than 
$\cost(\vec{u},\vec{v})$, and no cover with weight less than $w(M)$. 

(2)$\Rightarrow$(3):  Assume $M$ is a maximum-weight matching and $(\vec{u},\vec{v})$ is a 
minimum-weight cover of $G$. Suppose for a contradiction that the maximum matching $M$ is not 
a perfect matching of $H$. We will construct a weight cover whose cost is strictly less than 
$\cost(\vec{u},\vec{v})$, which contradicts that  $(\vec{u},\vec{v})$ is a minimum-weight cover.

Since the maximum matching $M$ is not a perfect matching of $H$, by \corref{hall}, we can 
construct in polytime a subset  $S\subseteq X$ satisfying \[|N(S)|<|S|.\] 
Then we calculate the quantity
\[\delta = \min\set{u_{i}+v_{j} - w_{i,j}\mid x_{i}\in S \;\wedge\; y_{j}\not\in N(S) }.\]
Note that $\delta>0$ since $H$ is the equality subgraph.
Next we construct  a  pair of sequences  $\vec{u}'=\seq{u'_{i}}_{i=1}^{n}$ and $\vec{v}'=\seq{v'_{i}}_{i=1}
^{n}$, as follows:
\begin{align*}
 u'_{i} = \begin{cases}
u_{i} -\delta 	& \text{ if } x_{i}\in S\\
u_{i}	 		& \text{ if } x_{i}\not\in S
\end{cases} &&
 v'_{i} = \begin{cases}
v_{j} +\delta 	& \text{ if } y_{j}\in N(S)\\
v_{j}	 		& \text{ if } y_{j}\not\in N(S)
\end{cases}
\end{align*}
We claim that $(\vec{u}',\vec{v}')$ is  again a weight cover. The condition $w_{i,j}\le  u'_{i} + v'_{j}$ 
might only be violated for $x_{i}\in S$ and $y_{i}\not\in N(S)$. But since we chose $\delta \le u_{i}
+v_{j} - w_{i,j}$, it follows that  \[w_{i,j}\le (u_{i}-\delta)+v_{j} = u'_{i} + v'_{j}.\] 
Since
\begin{align*}
\cost(\vec{u}',\vec{v}')&={\textstyle  \sum_{i=1}^{n} (u'_{i} + v'_{i})} \\
&=   \underbrace{{\textstyle \sum_{i=1}^{n}} (u_{i} + v_{i})}_{=\cost(\vec{u},\vec{v})} + \underbrace
{\delta|N(S)| - \delta |S|}_{<0},
\end{align*}
it follows that $\cost(\vec{u}',\vec{v}')<\cost(\vec{u},\vec{v})$.

(3)$\Rightarrow$(1): Suppose $M$ is a perfect matching of $H$.  Then $w_{i,j}= u_{i} + v_{j}$ holds 
for all edges in $M$. Summing equalities $w_{i,j}= u_{i} + v_{j}$ over all edges of $M$ yields the 
equality $\cost(\vec{u},\vec{v}) = w(M)$.
\end{proof}

\section*{Acknowledgement} We would like to thank Emil Je\v{r}\'abek for answering our questions regarding his framework. We also thank the referees and Lila Fontes for their constructive comments. This work was financially  supported by the Ontario Graduate Scholarship and the Natural Sciences and Engineering Research Council of Canada.

\bibliographystyle{plain} 
\bibliography{wphp}


\end{document}